\newlength{\figurewidth}
\newlength{\smallfigurewidth}
\newtheorem{theorem}{Theorem}
\newtheorem{lemma}{Lemma}
\newtheorem{corollary}{Corollary}
\newtheorem{Remark}{Remark}
\newtheorem{Definition}{Definition}
\begin{document}

\title
{\large
	\textbf{MR-RePair: Grammar Compression based on Maximal~Repeats}
}

\author{%
Isamu~Furuya$^{\ast}$,
Takuya~Takagi$^{\ast}$,
Yuto~Nakashima$^{\dag}$,
Shunsuke~Inenaga$^{\dag}$,\\
Hideo~Bannai$^{\dag}$,
Takuya~Kida$^{\ast}$\\[0.5em]
{\small\begin{minipage}{\linewidth}\begin{center}
		\begin{tabular}{ccc}
      $^{\ast}$ Graduate School of IST, & \hspace*{0.5in} & $^{\dag}$ Department of Informatics, \\
      Hokkaido University, Japan & & Kyushu University, Japan\\
      \url{furuya@ist.hokudai.ac.jp} & & \url{yuto.nakashima@inf.kyushu-u.ac.jp}\\
      \url{tkg@ist.hokudai.ac.jp} & & \url{inenaga@inf.kyushu-u.ac.jp}\\
      \url{kida@ist.hokudai.ac.jp} & & \url{bannai@inf.kyushu-u.ac.jp}
		\end{tabular}
	\end{center}\end{minipage}}
}
\date{}

\newcommand{\leaf}{\mathit{leaf}}

\newcommand{\ord}{\mathcal{O}}
\newcommand{\mr}{MR-RePair\xspace}
\newcommand{\mrn}{MR-RePair}
\newcommand{\nmr}{Na\"{\i}ve-MR-RePair\xspace}
\newcommand{\nmrn}{Na\"{\i}ve-MR-RePair}
\newcommand{\rp}{RePair\xspace}
\newcommand{\rpn}{RePair}
\newcommand{\mrorder}{MR-order\xspace}

\newcommand{\grp}{g_{\mathit{rp}}}
\newcommand{\gmr}{g_{\mathit{mr}}}
\newcommand{\gnmr}{g_{\mathit{nmr}}}
\newcommand{\sgrp}{\hat{g}_{\mathit{rp}}}
\newcommand{\sgmr}{\hat{g}_{\mathit{mr}}}
\newcommand{\sgnmr}{\hat{g}_{\mathit{nmr}}}
\newcommand{\Grp}{G_{\mathit{rp}}}
\newcommand{\Gmr}{G_{\mathit{mr}}}
\newcommand{\Gnmr}{G_{\mathit{nmr}}}
\newcommand{\sGrp}{\hat{G}_{\mathit{rp}}}
\newcommand{\sGmr}{\hat{G}_{\mathit{mr}}}
\newcommand{\sGnmr}{\hat{G}_{\mathit{mr}}}

\newcommand{\gram}[1]{G_\mathit{#1}}
\newcommand{\sgram}[1]{\hat{G}_\mathit{#1}}
\newcommand{\gsize}[1]{g_\mathit{#1}}
\newcommand{\sgsize}[1]{\hat{g}_\mathit{#1}}
\newcommand{\var}[1]{V_\mathit{#1}}
\newcommand{\sym}[1]{\Sigma_\mathit{#1}}
\newcommand{\stt}[1]{S_\mathit{#1}}
\newcommand{\rul}[1]{R_\mathit{#1}}
\newcommand{\svar}[1]{\hat{V}_\mathit{#1}}
\newcommand{\ssym}[1]{\hat{\Sigma}_\mathit{#1}}
\newcommand{\sstt}[1]{\hat{S}_\mathit{#1}}
\newcommand{\srul}[1]{\hat{R}_\mathit{#1}}
\newcommand{\set}[1]{\{#1\}}

\newcommand{\fp}{\mathit{fp}}
\newcommand{\fr}{\mathit{fr}}
\newcommand{\vrp}{v_{\mathit{rp}}}
\newcommand{\vmr}{v_{\mathit{mr}}}
\newcommand{\vnmr}{v_{\mathit{nmr}}}
\newcommand{\derive}[1]{\mathit{val}({#1})}

\newcommand{\sstr}[1]{s_\mathit{#1}}
\newcommand{\cnt}[1]{{\rm cnt}({#1})}

\newcommand{\tfrp}[1]{T^{({#1})}_{\mathit{rp}}}
\newcommand{\tfmr}[1]{T^{({#1})}_{\mathit{mr}}}
\newcommand{\rfrp}[1]{\hat{g}^{({#1})}_{\mathit{rp}}}
\newcommand{\rfmr}[1]{\hat{g}^{({#1})}_{\mathit{mr}}}
\newcommand{\wfrp}[1]{\tau^{({#1})}_{\mathit{rp}}}
\newcommand{\wfmr}[1]{\tau^{({#1})}_{\mathit{mr}}}
\newcommand{\trp}{t_{\mathit{rp}}}
\newcommand{\tmr}{t_{\mathit{mr}}}

\newcommand{\occ}[1]{\#{\rm occ}({#1})}
\newcommand{\vari}[1]{[\![{#1}]\!]}
\newcommand{\non}[1]{\bar{#1}}
\newcommand{\darrow}{\overset{*}{\Rightarrow}}

\maketitle
\thispagestyle{empty}

\begin{abstract}
We analyze the grammar generation algorithm of the \rp compression algorithm,
and show the relation between a grammar generated by \rp and maximal repeats.
We reveal that \rp replaces step by step the most frequent pairs within the corresponding most frequent maximal repeats.
Then, we design a novel variant of \rp, called \mr,
which substitutes the most frequent maximal repeats at once instead of substituting the most frequent pairs consecutively.
We implemented \mr and compared the size of the grammar generated by \mr to that by \rp on several text corpora.
Our experiments show that \mr generates more compact grammars than \rp does, especially for highly repetitive texts.
\end{abstract}

\section{Introduction}

Grammar compression is a method of lossless data compression 
that reduces the size of a given text 
by constructing a small context free grammar that uniquely derives the text.
While the problem of generating the smallest such grammar is NP-hard~\cite{1459058},
several approximation techniques have been proposed.
Among them, \rp~\cite{892708} is known as an off-line method 
that achieves a high compression ratio in practice~\cite{claude2010fast, gonzalez2007compressed, wan2003browsing},
despite its simple scheme.
There have been many studies concerning \rp, such as
extending it to an online algorithm~\cite{7786179},
improving its practical working time or space~\cite{7921912, sekine2014adaptive},
applications to other fields~\cite{claude2010fast, lohrey2013xml, tabei2016scalable},
and analyzing the generated grammar size theoretically~\cite{1459058,navarro2008re,Ochoa2018}.

Recently, maximal repeats have been considered as
a measure for estimating how repetitive a given string is:
Belazzougui et al.~\cite{10.1007/978-3-319-19929-0_3} showed that
the number of extensions of maximal repeats is an upper bound on
the number of runs in the Burrows-Wheeler transform
and the number of factors in the Lempel-Ziv parsing.
Also, several index structures whose size is bounded
by the number of extensions of maximal repeats
have been proposed~\cite{belazzougui2017fast, DBLP:conf/cpm/BelazzouguiC17, 10.1007/978-3-319-67428-5_26}.

In this paper,
we analyze the properties of \rp with regard to its relationship to maximal repeats.
As stated above, several works have studied \rp,
but, to the best of our knowledge,
none of them associate \rp with maximal repeats.
Moreover,
we propose a grammar compression algorithm,
called \mr,
that focuses on the property of maximal repeats.
Ahead of this work,
several off-line grammar compression schemes focusing on (non-maximal) repeats 
have been proposed~\cite{apostolico2000off, inenaga2003linear, 4148751}.
Very recently,
Ga{\'n}czorz and Je{\.z} addressed to heuristically improve the compression ratio of \rp 
with regard to the grammar size~\cite{ganczorz2017improvements}.
However, none of these techniques use the properties of maximal repeats.
We show that, under a specific condition,
there is a theoretical guarantee that
the size of the grammar generated by \mr is smaller than or equal to that 
generated by \rp.
We also confirmed the effectiveness of \mr compared to \rp through computational experiments.

\noindent
{\bf Contributions:} The primary contributions of this study are as follows.
\begin{enumerate}
	\item 
	We analyze \rp and show the relation between a grammar generated by \rp and maximal repeats.
	\item
	We design a novel variant of \rp called \mr,
		which is based on substituting the most frequent maximal repeats.
	\item
	We implemented our \mr algorithm and experimentally confirmed 
		that \mr reduces the size of the generated grammar compared to \rp;
		in particular, the size decreased to about 55\% for a highly repetitive text that we used in our experiment.
\end{enumerate}

The remainder of this paper is organized as follows.
In Section~\ref{sec:prelim}, 
we review the notations of strings and the definitions of maximal repeats, grammar compression, and \rp.
In Section~\ref{sec:repair},
we analyze \rp and show the relation between \rp and maximal repeats.
In Section~\ref{sec:mrrepair},
we define \mr, 
compare it with \rp, 
and describe the implementation of it.
In Section~\ref{sec:exp},
we present experimental results of comparison to \rp.
Finally, we conclude the paper in Section~\ref{sec:concl}.

\section{Preliminaries}\label{sec:prelim}

In this sections, we provide some notations and definitions to be used in the following sections.
In addition, we recall grammar compression and review the \rp.

\subsection{Basic notations and terms}
Let $\Sigma$ be an {\em alphabet}, which is an ordered finite set of symbols.
An element $T = t_1 \cdots t_n$ of $\Sigma^{*}$ is called a {\em string}, 
where $|T| = n$ denotes its length.
We denote the empty string by $\epsilon$ which is the string of length $0$, namely, $|\epsilon| = 0$.
Let $\Sigma^{+} = \Sigma^{*} \backslash \{{\epsilon}\}$.
A string is also called a {\em text}.
Let $T = t_1 \cdots t_n \in \Sigma^{n}$ be any text of length $n$.
If $T = usw$ with $u,s,w\in\Sigma^{*}$, then $s$ is called a {\em substring} of $T$.
Then, for any $1 \le i \le j \le n$,
let $T[i..j] = t_i \cdots t_j$ denote the substring of $T$ that begins and ends at positions $i$ and $j$ in $T$,
and let $T[i] = t_i$ denote the $i$th symbol of $T$.
For a finite set $S$ of texts,
text $T$ is said a {\em superstring} of $S$,
if $T$ contains all texts of $S$ as substrings.
We call the number of occurrences of $s$ in a text as a substring, the {\em frequency} of $s$,
and denote it by $\occ{s}$.
Texts $\Sigma^{*}$ and $\hat{\Sigma}^{*}$ are said to be isomorphic for alphabet $\Sigma$ and $\hat{\Sigma}$,
if there exists an isomorphism from $\Sigma$ to $\hat{\Sigma}$.

\subsection{Maximal repeats}
Let $s$ be a substring of text $T$.
If the frequency of $s$ is greater than $1$,
$s$ is called a {\em repeat}.
A {\em left} (or {\em right}) {\em extension} of $s$ is 
any substring of $T$ with the form $ws$ (or $sw$), where $w \in \Sigma^{*}$.
We say that $s$ is {\em left} (or {\em right}) {\em maximal}
if left (or right) extensions of $s$ occur strictly fewer times in $T$ than $s$, and
call $s$ a {\em maximal repeat} of $T$
if $s$ is left and right maximal.
We call $s$ a {\em maximal repeatg} of $T$
if both left- and right-extensions of $s$ occur strictly fewer times in $T$ than $s$. 
In this thesis, we consider only such strings with length more than $1$
as maximal repeats.
For example, substring \texttt{abra} of $T=$\texttt{abracadabra} is a maximal repeat,
while \texttt{br} is not.

\subsection{Grammar compression}
A {\em context free grammar} (CFG or {\em grammar}, simply) $G$ is defined as
a 4-tuple $G = \{V, \Sigma, S, R\}$,
where $V$ is an ordered finite set of {\em variables},
$\Sigma$ is an ordered finite alphabet,
$R$ is a finite set of binary relations called {\em production rules} (or {\em rules})
between $V$ and $(V \cup \Sigma)^{*}$,
and $S\in V$ is a special variable called {\em start variable}.
A production rule represents the manner in which a variable is substituted and written in 
a form $v \rightarrow w$
with $v \in V$ and $w \in (V \cup \Sigma)^{*}$.
Let $X, Y \in {V\cup\Sigma}^{*}$.
If there are $x_l,x,x_r,y\in(V \cup \Sigma)^{*}$ such that $X=x_lxx_r$, $Y=x_lyx_r$, and $x\rightarrow y\in R$,
we write $X\Rightarrow Y$, and denote the reflexive transitive closure of $\Rightarrow$ by $\darrow$.
Let $\derive{v}$ be the string derived from $v$, i.e., $v \darrow \derive{v}$,
and let $\vari{w}$ denote a variable that derives $w$, i.e. $\vari{w}\darrow w$.
Note that $\vari{w}$ is not necessarily unique.
We define grammar 
$\hat{G} = \{\hat{V}, \hat{\Sigma}, \hat{S}, \hat{R}\}$ 
as a {\em subgrammar} of $G$
if $\hat{V} \subseteq V$, $\hat{\Sigma} \subseteq V\cup\Sigma$, and $\hat{R} \subseteq R$.

Given a text $T$, 
{\em grammar compression} is a method of lossless text data compression
that constructs a restricted CFG,
which uniquely derives the given text $T$.
For $G$ to be deterministic, 
a production rule for each variable $v\in V$ must be unique.
In what follows, we assume that every grammar is deterministic and
each production rule is 
$v_i \rightarrow \mathit {expr}_i$,
where $\mathit{expr_i}$ is an expression 
either $\mathit{expr}_i = a$ ($a \in \Sigma$) 
or $\mathit{expr}_i = v_{j_1}v_{j_2}\cdots v_{j_n}$ (
$i > j_k$ for all $1\leq k \leq j_n$).

We estimate the effectiveness of compression by the size of generated grammar,
which is counted by the total length of the right-hand-side of all production rules of the generated grammar.

\subsection{\rp}
\rp is a grammar compression algorithm
proposed by Larsson and Moffat~\cite{892708}.
For input text $T$, let $G=\set{V,\Sigma ,S,R}$ be the grammar generated by \rp.
\rp constructs $G$ by the following steps:\\

	\noindent
	{\bf Step~1.} Replace each symbol $a\in \Sigma$ with a new variable $v_a$ and add $v_a\rightarrow a$ to $R$.\\
	{\bf Step~2.} Find the most frequent pair $p$ in $T$.\\
	{\bf Step~3.} Replace every occurrence (or, as many occurrences as possible, when $p$ is a pair consisting of the same symbol) of $p$ with a new variable $v$,
	then add $v \rightarrow p$ to $R$.\\
	{\bf Step~4.} Re-evaluate the frequencies of pairs	for the renewed text generated in {\bf Step~3}.
	If the maximum frequency is 1,
	add $S \rightarrow {\rm (current~text)}$ to $R$, and terminate.
	Otherwise, return to {\bf Step~2}. \\ 

\begin{figure*}[tb]
\centering
\begin{tabular}{c|c|c|c|c|c|c|c|c|c|c|c|}
\cline{2-12}
	& {\tt a} & {\tt b} & {\tt r} & {\tt a} & {\tt c} & {\tt a} & {\tt d} & {\tt a}  & {\tt b} & {\tt r} & {\tt a} \\ \cline{2-12} \noalign{\vspace{2pt}} \hline
	\multicolumn{1}{|c||}{$v_{\alpha} \rightarrow \alpha~(\alpha = {\rm {\tt a,b,r,c,d}})$}
& $v_{\rm {\tt a}}$ & $v_{\rm {\tt b}}$ & $v_{\rm {\tt r}}$ & $v_{\rm {\tt a}}$ & $v_{\rm {\tt c}}$ & $v_{\rm {\tt a}}$ & $v_{\rm {\tt d}}$ & $v_{\rm {\tt a}}$ & $v_{\rm {\tt b}}$ & $v_{\rm {\tt r}}$ & $v_{\rm {\tt a}}$ \\ \hline
\multicolumn{1}{|c||}{$v_1 \rightarrow v_{\rm {\tt a}}v_{\rm {\tt b}}$} 
	& \multicolumn{2}{c|}{$v_1$} & $v_{\rm {\tt r}}$ & $v_{\rm {\tt a}}$ & $v_{\rm {\tt c}}$ & $v_{\rm {\tt a}}$ & $v_{\rm {\tt d}}$ 
	& \multicolumn{2}{c|}{$v_1$} & $v_{\rm {\tt r}}$ & $v_{\rm {\tt a}}$ \\ \hline
\multicolumn{1}{|c||}{$v_2 \rightarrow v_1v_{\rm {\tt r}}$}                         
	& \multicolumn{3}{c|}{$v_2$} & $v_{\rm {\tt a}}$ & $v_{\rm {\tt c}}$ & $v_{\rm {\tt a}}$ & $v_{\rm {\tt d}}$ 
	& \multicolumn{3}{c|}{$v_2$}         & $v_{\rm {\tt a}}$ \\ \hline
\multicolumn{1}{|c||}{$v_3 \rightarrow v_2v_{\rm {\tt a}}$} & \multicolumn{4}{c|}{$v_3$} & $v_{\rm {\tt c}}$ & $v_{\rm {\tt a}}$ & $v_{\rm {\tt d}}$ 
	& \multicolumn{4}{c|}{$v_3$}                 \\ \hline
\multicolumn{1}{|c||}{$S \rightarrow v_3v_{\rm {\tt c}}v_{\rm {\tt a}}v_{\rm {\tt d}}v_3$} & \multicolumn{11}{c|}{$S$} \\ \hline
\end{tabular}
\caption{An example of the grammar generation process of \rp for text {\tt abracadabra}.
	The generated grammar is 
	$\set{
		\set{v_{\rm {\tt a}},v_{\rm {\tt b}},v_{\rm {\tt r}},v_{\rm {\tt c}},v_{\rm {\tt d}}, v_1, v_2, v_3, S},
		\set{{\rm {\tt a},{\tt b},{\tt r},{\tt c},{\tt d}}}, S, 
		\set{v_{\rm {\tt a}}\rightarrow {\rm {\tt a}},
			v_{\rm {\tt b}}\rightarrow {\rm {\tt b}},
			v_{\rm {\tt r}}\rightarrow {\rm {\tt r}},
			v_{\rm {\tt c}}\rightarrow {\rm {\tt c}},
			v_{\rm {\tt d}}\rightarrow {\rm {\tt d}},
			v_1 \rightarrow v_{\rm {\tt a}}v_{\rm {\tt b}},
			v_2 \rightarrow v_1v_{\rm {\tt c}},
			v_3 \rightarrow v_2v_{\rm {\tt d}},
			S \rightarrow v_3v_{\rm {\tt c}}v_{\rm {\tt a}}v_{\rm {\tt d}}v_3}
	}$,
	and the grammar size is $16$.
	}%
\label{fig:abrarp}
\end{figure*}

Figure~\ref{fig:abrarp} shows an example of the grammar generation process of \rp.

\begin{lemma}[\cite{892708}]\label{lem:rp}
	\rp works in $\ord (n)$ expected time and 
	$5n + 4k^2 + 4k^{\prime} + \lceil \sqrt{n + 1} \rceil - 1$ words of space,
	where $n$ is the length of the source text,
	$k$ is the cardinality of the source alphabet, and
	$k^{\prime}$ is the cardinality of the final dictionary.
\end{lemma}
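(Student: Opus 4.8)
The plan is to establish the two assertions—the $\ord(n)$ expected running time and the exact $5n + 4k^2 + 4k' + \lceil\sqrt{n+1}\rceil - 1$ word space bound—separately, since both follow from a careful description of the three data structures that drive \rp: (i) the working sequence, maintained as a doubly linked list so that an occurrence of a pair can be removed and its neighbours relinked in constant time; (ii) a hash table keyed on ordered pairs of symbols, each entry recording the current frequency of that pair together with a pointer into the sequence; and (iii) a priority queue of active pairs ordered by frequency. The whole analysis reduces to bounding the number of elementary operations performed on these structures and to accounting for the words they occupy.

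For the running time, I would first argue that the total number of \emph{individual} pair substitutions performed over the entire execution is at most $n-1$: each such substitution replaces two adjacent symbols by one new variable and hence shortens the working sequence by exactly one, while no step ever lengthens it. Next I would show that the amortized cost of a single substitution is $\ord(1)$ expected: removing an occurrence, relinking its neighbours, and updating the frequencies (and priority-queue positions) of the two pairs that are destroyed and the two that are created each touch a constant number of hash-table and list records, and every hash lookup costs $\ord(1)$ expected—this is the sole source of the word ``expected.'' The delicate point is the priority queue. Following Larsson and Moffat, I would use a bucketed structure with $\lceil\sqrt{n+1}\rceil$ slots, storing every pair of frequency $f < \lceil\sqrt{n+1}\rceil$ in bucket $f$ and collecting all higher-frequency pairs in one top slot; since the pair-occurrence counts sum to at most $n-1$, at most $\lceil\sqrt{n+1}\rceil$ distinct pairs can reside in that top slot at once, so scanning it to extract the true maximum costs $\ord(\sqrt{n})$. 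I would then bound the aggregate scanning cost by charging each such scan to the at least $\sqrt{n}$ substitutions carried out by the high-frequency rule being extracted; as the total number of substitutions is $\ord(n)$, at most $\ord(\sqrt{n})$ overflow scans occur, and all priority-queue work sums to $\ord(n)$.

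For the space bound, I would proceed by a line-by-line accounting of the words used by each structure, since the stated figure is exact rather than asymptotic. The dominant $5n$ term comes from the sequence representation—the symbol array together with the forward/backward link fields and the auxiliary pointers that thread occurrences of the same pair; the $4k^2$ term bounds the dense two-dimensional table indexed by ordered pairs over the $k$ original symbols that is used to tally pair frequencies during the initial pass before hashing takes over; the $4k'$ term accounts for the four words per rule of the final dictionary of $k'$ productions; and the $\lceil\sqrt{n+1}\rceil - 1$ term is exactly the array of bucket heads of the priority queue, whose length is fixed by the same threshold. Summing these yields the claimed expression.

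The main obstacle, I expect, is not the gross $\ord(n)$ bound but the quantitatively exact parts of the claim. On the time side, the residual difficulty is the bookkeeping that lets the highest nonempty bucket be located without repeated linear search, given that substitutions both destroy old pairs (lowering frequencies) and create new ones (possibly raising the maximum); a correct amortized argument must account for both directions of change. On the space side, pinning down the precise constants $5n$, $4k^2$, and $4k'$ does not follow from any abstract principle: it depends entirely on the concrete memory layout of the sequence, the initial dense pair-count table, and the dictionary, so the proof ultimately rests on faithfully reproducing the data-structure design of the original implementation rather than on any self-contained combinatorial argument.
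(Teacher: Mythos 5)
First, be aware of what this lemma is in the paper: it is quoted verbatim from Larsson and Moffat \cite{892708}, and the paper contains \emph{no proof} of it --- the citation is the proof. So your proposal cannot be checked against an in-paper argument, only against the original analysis, which you reconstruct in outline quite faithfully. The time half is essentially the Larsson--Moffat argument and is sound: at most $n-1$ substitutions, since each shortens the sequence by one and nothing lengthens it; expected $\ord(1)$ hash-table work per local update (correctly identified as the sole source of ``expected''); and the $\lceil\sqrt{n+1}\rceil$-bucket priority queue whose overflow list holds at most about $\sqrt{n+1}$ pairs, each $\ord(\sqrt{n})$ scan being amortized against the at least $\sqrt{n}$ replacements performed by the extracted pair. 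The one point you flag as a ``delicate residual difficulty'' --- that substitutions both lower and raise pair frequencies, so locating the highest nonempty bucket needs care --- in fact has a clean resolution you should have supplied rather than deferred: every pair created while replacing the currently most frequent pair $p$ (of frequency $f$) contains the fresh variable, which occurs at most $f$ times, so no new pair can exceed the current maximum; the maximum frequency is therefore non-increasing over the run, and a single monotonically descending pointer over the buckets costs $\ord(\sqrt{n})$ in total. Without this observation your amortization of the priority-queue maintenance is genuinely incomplete.

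On the space side, your word-by-word accounting is conjectural: the dense $k\times k$ table you invoke for the $4k^2$ term and the four-words-per-rule reading of $4k'$ are guesses at a memory layout, and you concede yourself that the exact constants ``rest on faithfully reproducing the data-structure design of the original implementation.'' That concession is accurate --- the figure $5n + 4k^2 + 4k' + \lceil\sqrt{n+1}\rceil - 1$ is a statement about one concrete implementation, not a consequence of the algorithm's abstract description --- but it also means that, as written, your space argument is an appeal to the original paper rather than a proof. Since citing \cite{892708} is exactly what the paper under review does, the honest verdict is: your time analysis is a correct reconstruction modulo the bucket-maximum gap identified above, and your space analysis should either reproduce Larsson and Moffat's layout explicitly or simply be replaced by the citation, as the authors chose to do.
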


\section{Analyzing \rp}\label{sec:repair}

In this section, 
we analyze \rp with regard to its relationship to maximal repeats,
and introduce an important concept, called \mrorder.

\subsection{\rp and maximal repeats}

The following theorem shows an essential property of \rp.
That is, \rp recursively replaces the most frequent maximal repeats.

\begin{theorem}\label{theo:rpmr}
    Let $T$ be a given text,
    and assume that every most frequent maximal repeat of $T$ does not appear with overlaps with itself.
	Let $f$ be the frequency of the most frequent pairs of $T$, and
	$t$ be a text obtained after all pairs with frequency $f$ in $T$ are replaced by variables.
	Then, there is a text $s$ such that 
	$s$ is obtained after all maximal repeats with frequency $f$ in $T$ are replaced by variables,
	and $s$ and $t$ are isomorphic to each other.
\end{theorem}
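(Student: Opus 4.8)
The plan is to show that a single round of \rp at frequency level $f$ collapses each occurrence of each most frequent maximal repeat into one variable while leaving every other position untouched, so that the text $t$ differs from a suitably chosen maximal-repeat-replaced text $s$ only by the names of the freshly introduced variables. The whole argument is driven by a correspondence between frequency-$f$ pairs and frequency-$f$ maximal repeats, and by an invariant maintained throughout the round.

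First I would pin down the frequencies. I would prove the auxiliary fact that the largest frequency attained by any maximal repeat equals $f$, the largest frequency attained by any pair: a maximal repeat has length at least $2$ and hence contains a pair that occurs at least as often, which gives one inequality, while a most frequent pair can be extended one character at a time, each forced extension preserving the frequency, until it becomes left- and right-maximal, which gives the reverse inequality. Consequently the most frequent maximal repeats of the hypothesis are exactly the maximal repeats of frequency $f$. I would then establish the central correspondence: for a maximal repeat $m$ with $\occ{m}=f$, every internal pair of $m$ has frequency exactly $f$ (at least $f$ because it occurs once per occurrence of $m$, at most $f$ because $f$ is the maximum), and, conversely, every occurrence of a frequency-$f$ pair lies inside exactly one occurrence of such an $m$, namely its unique maximal extension, with the occurrences of $m$ in bijection with those of each of its internal pairs.

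The key structural step is an invariant argument on the \rp round. Using the no-self-overlap hypothesis, the $f$ occurrences of a fixed $m=c_1\cdots c_\ell$ are pairwise disjoint, so when \rp replaces an internal pair $c_ic_{i+1}$ by a new variable $A$, that variable occurs exactly $f$ times, once per occurrence of $m$, and the pairs it newly creates again have frequency exactly $f$ while they remain internal to $m$ and frequency strictly less than $f$ at the two boundaries (by left- and right-maximality, since the characters flanking $m$ vary across its occurrences). Hence the replacements stay confined to the occurrences of frequency-$f$ maximal repeats, the maximum frequency remains $f$ until every such occurrence has been contracted, and each occurrence of $m$ is reduced to one and the same variable. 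Pairs lying entirely outside every frequency-$f$ maximal repeat have frequency below $f$ and are never created with frequency $f$, so they are never touched during the round. Reading off the contracted text shows that $t$ consists of exactly one symbol per occurrence of a most frequent maximal repeat, together with the unchanged remaining symbols; defining $s$ by the same block contraction and matching the names of the fresh symbols to those chosen by \rp yields the isomorphism $s\cong t$.

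The main obstacle I expect is the treatment of overlaps. Self-overlap of a most frequent maximal repeat would prevent \rp from replacing all of its occurrences, since it replaces only as many as possible, which is precisely why the hypothesis forbids it. I would then verify that two distinct frequency-$f$ maximal repeats can overlap only in a single shared character: a shared factor of length at least $2$ would contain a pair, forcing its frequency to be exactly $f$ and hence forcing each of its occurrences to extend, on both sides, to the full repeats, which contradicts their maximality. Finally I would argue that such a shared character is contracted into whichever repeat \rp processes first, so the only remaining ambiguity is exactly the tie-breaking ambiguity of \rp. Since the theorem asserts only the \emph{existence} of a suitable $s$, I would resolve this by constructing $s$ to contract the same maximal repeats, in the same order, that \rp effectively contracts, making the two block structures, and hence the two texts up to renaming, coincide.
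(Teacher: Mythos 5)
Your overall route coincides with the paper's: your ``auxiliary fact'' is its Corollary~\ref{coro:pm}, your pair/maximal-repeat correspondence is its Lemma~\ref{lem:pm}, your single-shared-character claim is its Lemma~\ref{lem:lenol}, and your round-invariant plus order-matching construction of $s$ mirrors its proof of Theorem~\ref{theo:rpmr}. However, one load-bearing step of your invariant is wrong as stated: you claim the pairs at the two boundaries of an occurrence of $m$ have frequency strictly less than $f$ ``by left- and right-maximality, since the characters flanking $m$ vary across its occurrences.'' Maximality of $m$ only bounds the frequency of the one-character extensions $zm$ and $mw$ as whole strings; it does not bound the frequency of the two-symbol pair $z\,m[1]$, which can be exactly $f$ when it is the final pair of a \emph{different} most frequent maximal repeat $r'$ ending at $m[1]$. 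The paper's own example {\tt abcdeabccde} witnesses this: for $m={\tt cde}$ the boundary pair {\tt bc} has frequency $f=2$ because it lies inside $r'={\tt abc}$. So replacements do not always ``stay confined'' to occurrences of $m$, and this failure is precisely why the paper splits its proof into case (i) ($\occ{z\dot{x}}<f$ and $\occ{\dot{y}w}<f$), where your confinement argument is valid, and case (ii) ($\occ{z\dot{x}}=f$ or $\occ{\dot{y}w}=f$), where it is not.

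Your final paragraph does contain the germ of the repair --- the shared character is absorbed by whichever repeat is contracted first, and $s$ is chosen to follow the same order, which is exactly the paper's case (ii) --- but it leaves the essential verification unperformed, and note that it contradicts your paragraph-three claim rather than supplementing it. Concretely, after the pair $z\dot{x}$ ($\dot{x}=m[1]$) is replaced by a variable $v$, you must show that the truncated string $m[2..|m|]$ is again a maximal repeat of frequency $f$ in the renewed text, so that the confinement invariant can be re-applied to it and the block structure of $t$ comes out as you describe. Its frequency and right-maximality are inherited from $m$, but its left-maximality, i.e.\ $\occ{v\,m[2]}<f$, does not follow from the maximality of $m$ (left extensions of $m[2..|m|]$ are now a mix of $v$ and leftover $\dot{x}$); it follows from the \emph{right}-maximality of $r'$: since $z\dot{x}$ occurs only inside occurrences of $r'$, $\occ{v\,m[2]}=f$ would force every occurrence of $r'$ to be followed by $m[2]$, contradicting that $r'$ is right-maximal. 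With your paragraph-three boundary claim restricted to case (i) and this verification added for case (ii), your plan becomes essentially the paper's proof.
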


We need two lemmas and a corollary to prove Theorem~\ref{theo:rpmr}.
The following lemma shows a fundamental relation between the most frequent maximal repeats and
the most frequent pairs in a text.

\begin{lemma}\label{lem:pm}
	A pair $p$ of variables is most frequent in text $T$
	if and only if
	$p$ occurs once in exactly one of the most frequent maximal repeats of $T$.
\end{lemma}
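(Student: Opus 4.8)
The plan is to pivot on the single quantity $f$, the maximum pair frequency, and first observe that it coincides with the maximum frequency attained by any maximal repeat. Indeed, every maximal repeat has length at least $2$, so it contains some pair $q$; since distinct occurrences of a string induce distinct occurrences of a fixed-offset substring, we get $\occ{r}\le\occ{q}\le f$ for every maximal repeat $r$. Thus no maximal repeat is more frequent than $f$, and the most frequent maximal repeats are exactly those with $\occ{\cdot}=f$. Both directions of the equivalence will then be reductions to this fact together with a canonical ``extend while the frequency is preserved'' construction.

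For the ($\Rightarrow$) direction, I would start from a most frequent pair $p$ with $\occ{p}=f$ and grow it into a maximal repeat $s$. First extend $p$ to the right: if all $f$ occurrences are followed by the same symbol (and none ends the text), the one-symbol right-extension still has frequency $f$ and the same set of start positions, so I repeat until the following symbols differ, yielding a right-maximal string of frequency $f$. Then extend to the left in the same way; because every left-extension step is shared by all $f$ occurrences, the right end-contexts are untouched, so right-maximality survives and the result $s$ is both left- and right-maximal, i.e.\ a maximal repeat with $\occ{s}=f$, hence a most frequent one that contains $p$. The delicate point is that I must check the two extension phases do not interfere, which follows because preserving the frequency forces the occurrence set to stay fixed as a set of aligned intervals.

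It remains to pin down \emph{once} and \emph{exactly one}, and here the hypothesis that most frequent maximal repeats do not self-overlap is essential. Since the $f$ occurrences of $s$ are then pairwise disjoint, $p$ cannot occur twice inside $s$ (that would force $\occ{p}\ge 2f>f$), so $p$ occurs exactly once in $s$; consequently the map sending each occurrence of $s$ to its unique internal occurrence of $p$ is injective with image of size $f=\occ{p}$, hence surjective, so every occurrence of $p$ sits inside an occurrence of $s$. If $p$ also lay in a second most frequent maximal repeat $s'$, the same surjectivity would force $s$ and $s'$ to appear aligned on $p$ at every occurrence of $p$, making one of them a frequency-$f$ substring of the other and therefore not maximal---a contradiction, so $s$ is unique. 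The ($\Leftarrow$) direction is then immediate: if $p$ occurs once in exactly one most frequent maximal repeat $s$ with $\occ{s}=f$, the distinct occurrences of $s$ give distinct occurrences of $p$, so $\occ{p}\ge f$, and $\occ{p}\le f$ by maximality of $f$, whence $\occ{p}=f$ and $p$ is most frequent.

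I expect the main obstacle to be the uniqueness argument rather than the existence of $s$: ruling out that a most frequent pair is shared by two different most frequent maximal repeats, or occurs more than once in one of them, is exactly where the non-overlap assumption does its work, and the surjectivity-of-occurrences bookkeeping must be set up carefully so that ``aligned on $p$'' really forces the substring relation and the consequent failure of maximality.
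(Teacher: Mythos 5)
Your overall route---grow the most frequent pair $p$ into a maximal repeat by frequency-preserving one-symbol extensions, then count occurrences of $p$ via its fixed offset inside the repeat---is the same skeleton as the paper's proof, and your ($\Leftarrow$) direction is sound. The genuine gap is in the uniqueness step. You claim that the surjectivity bookkeeping, which forces $s$ and $s'$ to occur aligned on $p$ at every occurrence of $p$, ``really forces the substring relation,'' so that one of $s,s'$ is a frequency-$f$ substring of the other. That is false in general: since $s\neq s'$ each contain $p$ exactly once, they may overhang $p$ on opposite sides (say $s$ reaches strictly further left and $s'$ strictly further right), in which case neither is a substring of the other and your stated contradiction never materializes. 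You flagged this as the delicate point but asserted the wrong resolution. The paper closes exactly this case by a different move: because $s$ and $s'$ co-occur at a fixed relative offset at all $f$ occurrences of $p$, the string $w$ spanning the union of the two aligned intervals is a proper superstring of both with $\occ{w}=\occ{s}=\occ{s'}=\occ{p}$; equivalently, each of $s$ and $s'$ admits a one-sided extension of unchanged frequency, contradicting left- or right-maximality. Your nested case is only the degenerate subcase of this argument.

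A second, related defect: you declare the hypothesis that most frequent maximal repeats do not self-overlap to be ``essential.'' It is not part of the lemma's statement---that assumption belongs to Theorem~\ref{theo:rpmr}, while the lemma is invoked elsewhere (e.g.\ in the proofs of Theorem~\ref{theo:order} and Theorem~\ref{theo:ratio}) with no such assumption in force---so a proof leaning on it establishes only a weakened statement. It is also unnecessary: if $p$ occurred at two offsets $d_1<d_2$ in $s$, then the occurrences of $s$ at positions $i_1<\cdots<i_f$ yield occurrences of $p$ at positions $\{i_j+d_1\}\cup\{i_j+d_2\}$, a set of size at least $f+1$ even when occurrences of $s$ overlap (its largest element $i_f+d_2$ exceeds every $i_j+d_1$), giving $\occ{p}>f$, a contradiction; likewise the fixed-offset map from occurrences of $s$ to occurrences of $p$ is injective with no disjointness needed. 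Dropping the self-overlap crutch and replacing ``forces the substring relation'' by the superstring (or frequency-preserving extension) argument brings your proof in line with the paper's.
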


\begin{proof}
	($\Rightarrow$)
	Let $r$ be a most frequent maximal repeat that contains $p$ as a substring.
	It is clear that $p$ can only occur once in $r$, since otherwise,
	$\occ{p} > \occ{r}$ would hold, implying the existence of a frequent maximal repeat that is more frequent than $r$,
	contradicting the assumption that $r$ is most frequent.
	Suppose there exists a different most frequent maximal repeat $r'$ that contains $p$ as a substring.
	Similarly, $p$ occurs only once in $r'$.
	Furthermore, since $r$ and $r'$ can be obtained by left and right extensions to $p$,
	$\occ{r} = \occ{r'} = \occ{p}$, and any occurrence of $p$ is contained in an occurrence of both $r$ and $r'$.
	Since $r'$ cannot be a substring of $r$, there exists some string $w$
	that is a superstring of $r$ and $r'$, such that $\occ{w} = \occ{r} = \occ{r'} = \occ{p}$.
	However, this contradicts that $r$ and $r'$ are maximal repeats.

	($\Leftarrow$)
	Let $r$ be the most frequent maximal repeat such that $p$ occurs once in.
	By definition, $\occ{r} = \occ{p}$.
	If $p$ is not the most frequent symbol pair in $T$, there exists some symbol pair $p'$ in $T$
	such that $\occ{p'} > \occ{p} = \occ{r}$. 
	However, this implies that there is a maximal repeat $r'$ with $\occ{r'} = \occ{p'} > \occ{r}$,
	contradicting that $r$ is most frequent.
\end{proof}

The following corollary is directly derived from Lemma~\ref{lem:pm}.

\begin{corollary}\label{coro:pm}
	For a given text,
	the frequency of the most frequent pairs and that of the most frequent maximal repeats are the same.
\end{corollary}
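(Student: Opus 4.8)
The plan is to pin both the most-frequent-pair frequency and the most-frequent-maximal-repeat frequency to a single common value by playing the two directions of Lemma~\ref{lem:pm} against each other. Write $f$ for the frequency of the most frequent pairs and $m$ for the frequency of the most frequent maximal repeats; the goal is to show $f = m$.

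The easy inequality is $m \le f$. Every maximal repeat has length greater than $1$, so a most frequent maximal repeat $r$ with $\occ{r} = m$ contains at least one pair $q$ of adjacent symbols. Each occurrence of $r$ carries an occurrence of $q$, whence $\occ{q} \ge \occ{r} = m$, and since $f$ is the largest pair frequency we get $f \ge \occ{q} \ge m$. This step needs only the length convention fixed in the preliminaries together with a one-line counting remark.

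For the reverse inequality $f \le m$ I would invoke Lemma~\ref{lem:pm} directly. Starting from a most frequent pair $p$ with $\occ{p} = f$, the forward direction of the lemma supplies a most frequent maximal repeat $r$, so $\occ{r} = m$, in which $p$ occurs exactly once. The decisive point is that this single occurrence sets up a bijection between the occurrences of $r$ and the occurrences of $p$: each occurrence of $r$ contains exactly one $p$, and, conversely, every occurrence of $p$ extends to an occurrence of $r$. Hence $\occ{p} = \occ{r}$, that is, $f = m$, which already closes the argument and subsumes the first inequality.

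I expect the only genuinely delicate point to be the backward half of that bijection, namely the claim that no occurrence of $p$ escapes $r$ (which would otherwise permit $\occ{p} > \occ{r}$). This is precisely the fact recorded inside the proof of Lemma~\ref{lem:pm}, where $r$ is built from $p$ by left and right extensions and every occurrence of $p$ is shown to lie within an occurrence of $r$; I would cite that reasoning rather than reprove it. Everything else is bookkeeping, which is why the statement is presented as an immediate corollary.
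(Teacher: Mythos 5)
Your proposal is correct and takes essentially the same route as the paper, which states the corollary without a separate proof as ``directly derived from Lemma~\ref{lem:pm}'': your two inequalities simply unpack that derivation, with the easy direction $m \le f$ from counting the pair inside a most frequent maximal repeat, and the equality $\occ{p} = \occ{r}$ coming from exactly the fact recorded inside the proof of Lemma~\ref{lem:pm} (every occurrence of $p$ extends to an occurrence of the unique most frequent maximal repeat containing it once). You correctly identify that this containment fact is the only non-bookkeeping ingredient, and citing the lemma's proof for it is precisely what the paper intends.
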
 

The following lemma shows an important property of the most frequent maximal repeats.

\begin{lemma}\label{lem:lenol}
	The length of overlap between any two occurrences of most frequent maximal repeats is at most 1.
\end{lemma}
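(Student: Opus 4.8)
The plan is to argue by contradiction. Suppose two occurrences of most frequent maximal repeats $r$ and $r'$ (possibly $r=r'$ as strings) overlap in a region of length at least $2$. Since this region has length at least $2$, it contains at least one pair $p$ of adjacent symbols. First I would observe that $p$ is a most frequent pair: because $p$ is a substring of the most frequent maximal repeat $r$, every occurrence of $r$ contains an occurrence of $p$, so $\occ{p} \ge \occ{r}$; combined with the fact that $\occ{r}$ already equals the maximum pair frequency $f$ by Corollary~\ref{coro:pm}, and that $\occ{p}$ cannot exceed $f$, this forces $\occ{p} = f$, so $p$ is most frequent.

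Next I would invoke Lemma~\ref{lem:pm}, which tells us that a most frequent pair occurs exactly once in exactly one of the most frequent maximal repeats. Let $\rho$ be that unique repeat. Since $p$ is a substring of both $r$ and $r'$, both of which are most frequent maximal repeats, the uniqueness clause forces $r = \rho = r'$; that is, the two overlapping occurrences must be occurrences of one and the same string $r$, and moreover $p$ sits at a single, fixed relative position within $r$.

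Finally I would derive the contradiction from an alignment argument. Write the two occurrences of $r$ as $T[i..i+m-1]$ and $T[j..j+m-1]$ with $i<j$ and $m=|r|$; the overlap is their common region and lies inside both. The pair $p$ chosen from the overlap occupies absolute positions $[k,k+1]$ that lie inside both occurrences. Because $p$ occurs exactly once in $r$, its relative position inside any occurrence of $r$ is uniquely determined, so $k-i$ and $k-j$ must both equal that relative position; hence $i=j$, contradicting $i<j$. The main obstacle is precisely this last same-string case: ruling out that two distinct occurrences of the same most frequent maximal repeat overlap by more than one symbol. The key that makes it work is the ``occurs exactly once'' half of Lemma~\ref{lem:pm}, which pins the pair to a single relative position and thereby collapses the two occurrences onto each other.
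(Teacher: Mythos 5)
Your proof is correct, and it reaches the lemma by a genuinely different route than the paper. The paper argues directly on the overlap string: writing the two overlapping occurrences as $xw$ and $wy$ with $x,y,w\in\Sigma^{+}$ and supposing $|w|\geq 2$, it observes that $\occ{w}=\occ{xw}=\occ{wy}$ (otherwise $w$ would extend to a maximal repeat more frequent than the most frequent ones), so every occurrence of $w$ is preceded by $x$ and followed by $y$; hence $\occ{xwy}=\occ{xw}=\occ{wy}$, contradicting that $xw$ is right maximal and $wy$ is left maximal. You instead deduce the lemma from Lemma~\ref{lem:pm} together with Corollary~\ref{coro:pm}: any pair inside a length-$\geq 2$ overlap is most frequent, so by the ``exactly one'' clause of Lemma~\ref{lem:pm} the two repeats must be the same string, and by the ``occurs once'' clause the pair's offset inside that string is fixed, which collapses the two occurrences onto the same position. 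Both arguments are sound and of comparable length. Yours buys economy (it reuses machinery the paper has already proved) and handles the self-overlap case --- and even the degenerate case of one occurrence containing the other, which the paper's $xwy$ factorization with nonempty $x,y$ leaves implicit --- completely uniformly via the alignment step; the paper's version buys self-containedness: it is an elementary frequency-and-maximality argument about extensions that does not depend on the uniqueness half of Lemma~\ref{lem:pm}, and it exhibits the contradiction with maximality directly rather than through an offset-counting detour.
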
 

\begin{proof}
	Let $xw$ and $wy$ be most frequent maximal repeats that
	have an overlapping occurrence $xwy$,
	where $x,y,w\in\Sigma^{+}$.
	If $|w| \geq 2$, then, since $xw$ and $wy$ are most frequent maximal repeats,
	it must be that $\occ{w} = \occ{xw} = \occ{wy}$, i.e., every occurrence of
	$w$ is preceded by $x$ and followed by $y$.
	This implies that $\occ{xwy} = \occ{xw} = \occ{wy}$ as well,
	but contradicts that $xw$ and $wy$ are maximal repeats.	
\end{proof} 

From the above lemmas and a corollary, now we can prove Theorem~\ref{theo:rpmr}.

\vspace{-6pt}
\begin{proof}[Proof of Theorem~\ref{theo:rpmr}]
	By Corollary~\ref{coro:pm}, the frequency of the most frequent maximal repeats in $T$ is $f$.
	Let $p$ be one of the most frequent pairs in $T$.
	By Lemma~\ref{lem:pm}, there is a unique maximal repeat that is most frequent and contains $p$ once.
	We denote such maximal repeat by $r$.
	Assume that there is a substring $zxpyw$ in $T$,
	where $z,w\in \Sigma$, $x,y\in \Sigma^{*}$, and $xpy = r$.
	We denote $r[1]$ and $r[|r|]$ by $\dot{x}$ and $\dot{y}$, respectively.
	There are 2 cases to consider:

	\noindent {\bf (i) $\occ{z\dot{x}} < f$ and $\occ{\dot{y}w} < f$.}
	If $|r| = 2$, the replacement of $p$ directly corresponds to 
	the replacement of the most frequent maximal repeat, since $p = r$.
	If $|r| > 2$, after $p$ is replaced with a variable $v$, $r$ is changed to $xvy$.
	This occurs $f$ times in the renewed text,
	and by Lemma~\ref{lem:pm}, the frequency of every pair occurring in $xvy$ 
	is still $f$.
	Because the maximum frequency of pairs does not increase, $f$ is still the maximum frequency.
	Therefore, we replace all pairs contained in $xvy$ in the following steps,
	and $z\dot{x}$ and $\dot{y}w$ are not replaced.
	This holds for every occurrence of $p$,
	implying that replacing the most frequent pairs while the 
	maximum frequency does not change, 
	corresponds to replacing all pairs contained (old and new)
	in most frequent maximal repeats of the same frequency
	until they are replaced by a single variable.
	Then, we can generate $s$ by replacement of $r$.

	\noindent {\bf (ii) $\occ{z\dot{x}} = f$ or $\occ{\dot{y}w} = f$.}
	We consider the case where $\occ{z\dot{x}} = f$.
	Note that $\occ{zxpy} < f$ by assumption that $xpy$ is a maximal repeat.
	Suppose \rp replaces $z\dot{x}$ by a variable $v$ before $p$ is replaced.
	Note that by Lemma~\ref{lem:pm}, there is a maximal repeat occurring $f$ times and including $z\dot{x}$ once
	(we denote the maximal repeat by $r'$),
	and $r' \neq r$ by assumption.
	By Lemma~\ref{lem:lenol}, the length of overlap of $r$ and $r'$ is at most 1, then only $\dot{x}$ is
	a symbol contained both $r$ and $r'$.
	After that, $xpy=r$ is no longer the most frequent maximal repeat
	because some of its occurrences are changed to $vr[2..|r|]$.
	However, $r[2..|r|]$ still occurs $f$ times in the renewed text.
	Since $\occ{zxpy}<f$ and $\occ{xpy}=f$, $\occ{vr[2]}<f$ and $r[2..|r|]$ is a maximal repeat.
	Then, $r[2..|r|]$ will become a variable in subsequent steps, similarly to {\bf (i)}.
	Here, $r'$ would also become a variable.
	Thus, we can generate $s$ in the way that we replace $r'$ first, then we replace $r[2..|r|]$.
	This holds similarly for $\dot{y}w$ when $\occ{\dot{y}w}=f$,
	and when $\occ{z\dot{x}} = \occ{\dot{y}w}=f$.
\end{proof}
\vspace{-6pt}

\subsection{\mrorder}

From Theorem~\ref{theo:rpmr}, if the most frequent maximal repeat is unique in the current text,
then all the occurrences of it are replaced step by step by \rp.
However, it is a problem if there are two or more most frequent maximal repeats and some of them overlap.
In this case, which maximal repeat is first summarized up depends on
the order in which the most frequent pairs are selected.
Note, however, if there are multiple most frequent pairs, 
which pair is first replaced depends on the implementation of \rp.
We call this order of selecting (summarizing) maximal repeats
{\em maximal repeat selection order} (or {\em \mrorder}, simply).

For instance,
consider a text {\tt abcdeabccde}.
{\tt abc} and {\tt cde} are the most frequent maximal repeats occurring 2 times.
There are 2 \mrorder,
depending on which one is attached priority to the other.
The results after replacement by \rp with the \mrorder are
(i) $xyx{\tt c}x$ with variables $x$ and $y$ such that $x\darrow {\tt abc}$ and $y\darrow {\tt de}$, and
(ii) $zwz{\tt c}w$ with variables $z$ and $w$ such that $z\darrow {\tt ab}$ and $w\darrow {\tt cde}$.
More precisely,
there are 12 possible ways in which \rp can compress the text,
and the generated rule sets are:
\begin{enumerate}
	\item $\set{v_1\rightarrow {\tt ab}, v_2\rightarrow v_1{\tt c}, v_3\rightarrow {\tt de}, 
		S\rightarrow v_2v_3v_2{\tt c}v_3}$,
	\item $\set{v_1\rightarrow {\tt ab}, v_2\rightarrow {\tt de}, v_3\rightarrow v_1{\tt c},
		S\rightarrow v_3v_2v_3{\tt c}v_2}$,
	\item $\set{v_1\rightarrow {\tt bc}, v_2\rightarrow {\tt a}v_1, v_3\rightarrow {\tt de}, 
		S\rightarrow v_2v_3v_2{\tt c}v_3}$,
	\item $\set{v_1\rightarrow {\tt bc}, v_2\rightarrow {\tt de}, v_3\rightarrow {\tt a}v_1, 
		S\rightarrow v_3v_2v_3{\tt c}v_2}$,
	\item $\set{v_1\rightarrow {\tt ed}, v_2\rightarrow {\tt ab}, v_3\rightarrow v_2{\tt c}, 
		S\rightarrow v_3v_1v_3{\tt c}v_1}$,
	\item $\set{v_1\rightarrow {\tt ed}, v_2\rightarrow {\tt bc}, v_3\rightarrow {\tt a}v_2, 
		S\rightarrow v_3v_1v_3{\tt c}v_1}$,
	\item $\set{v_1\rightarrow {\tt ab}, v_2\rightarrow {\tt cd}, v_3\rightarrow v_2{\tt e}, 
		S\rightarrow v_1v_3v_1{\tt c}v_3}$,
	\item $\set{v_1\rightarrow {\tt ab}, v_2\rightarrow {\tt de}, v_3\rightarrow {\tt c}v_2,
		S\rightarrow v_1v_3v_1{\tt c}v_3}$,
	\item $\set{v_1\rightarrow {\tt cd}, v_2\rightarrow {\tt ab}, v_3\rightarrow v_1{\tt e}, 
		S\rightarrow v_2v_3v_2{\tt c}v_3}$,
	\item $\set{v_1\rightarrow {\tt cd}, v_2\rightarrow v_1{\tt e}, v_3\rightarrow {\tt ab}, 
		S\rightarrow v_3v_2v_3{\tt c}v_2}$,
	\item $\set{v_1\rightarrow {\tt ed}, v_2\rightarrow {\tt ab}, v_3\rightarrow {\tt c}v_1, 
		S\rightarrow v_2v_3v_2{\tt c}v_3}$,
	\item $\set{v_1\rightarrow {\tt ed}, v_2\rightarrow {\tt c}v_1, v_3\rightarrow {\tt ab},
		S\rightarrow v_3v_2v_3{\tt c}v_2}$.
\end{enumerate} 
Here, 1 - 6 have the same \mrorder,
because ${\tt abc}$ is prior to ${\tt cde}$ in all of them.
On the other hand, 7 - 12 have the same \mrorder for similar reason;
${\tt cde}$ is prior to ${\tt abc}$.

The size of the grammar generated by \rp varies according to
how to select a pair when there are several distinct most frequent pairs that overlap.
For instance,
consider the text
{\tt bcxdabcyabzdabvbcuda}.
There are 3 most frequent pairs,
${\tt ab}$, ${\tt bc}$, and ${\tt da}$
with 3 occurrences.
If \rp takes ${\tt ab}$ first,
the rule set of generated grammar may become
$\{v_1 \rightarrow {\tt ab},
~v_2 \rightarrow {\tt bc},
~v_3 \rightarrow {\tt d}v_1,
~S \rightarrow 
v_2
{\tt x}
v_3
{\tt c}
{\tt y}
v_1
{\tt z}
v_3
{\tt v}
v_2
{\tt u}
{\tt d}
{\tt a}\}$
and the size of it is 19.
On the other hand,
if \rp takes ${\tt da}$ first,
the rule set of generated grammar may become
$\{v_1 \rightarrow {\tt da},
~v_2 \rightarrow {\tt bc},
~S \rightarrow 
v_2
{\tt x}
v_1
v_2
{\tt y}
{\tt a}
{\tt b}
{\tt z}
v_1
{\tt b}
{\tt v}
v_2
{\tt u}
v_1
\}$
and the size of it is 18.

\begin{Remark}\label{rem:order}
	If there are several distinct pairs with the same maximum frequency,
	the size of the grammar generated by \rp depends on the replacement order of them.
\end{Remark}

However, the following theorem states that MR-order rather than the replacement order of pairs is essentially important for the size of the grammar generated by \rp.

\begin{theorem}\label{theo:order}
	The sizes of the grammars generated by \rp are the same
	if they are generated in the same \mrorder.
\end{theorem}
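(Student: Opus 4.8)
The plan is to reduce the claim to a counting argument and then to show that the relevant counts are invariant across executions sharing an \mrorder. Since every non-start, non-terminal rule produced by \rp has the form $v\rightarrow v_iv_j$ and thus contributes exactly $2$ to the grammar size, while each terminal rule $v_a\rightarrow a$ contributes $1$ and the start rule $S\rightarrow(\text{final text})$ contributes the length of the final text, the size of the generated grammar equals $|\Sigma| + 2N + L$, where $N$ is the number of pair rules and $L$ is the length of the final text. The alphabet contribution $|\Sigma|$ is fixed, so it suffices to prove that any two executions following the same \mrorder create the same number $N$ of pair rules and terminate with final texts of the same length $L$.

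To this end, I would use Theorem~\ref{theo:rpmr} to regard an execution of \rp as a sequence of \emph{maximal-repeat collapses}: by that theorem together with Corollary~\ref{coro:pm}, the consecutive pair replacements that \rp performs while the maximum frequency stays at a value $f$ amount to replacing, occurrence by occurrence, the most frequent maximal repeats of frequency $f$ by single variables, where such a repeat may be shortened at a boundary symbol it shares with a neighbouring repeat that was collapsed earlier. Fixing an \mrorder and two executions $E_1,E_2$ that realise it, I would prove by induction on the number of collapses that the current texts of $E_1$ and $E_2$ are isomorphic after each collapse. The base case is the common initial text. For the inductive step, the isomorphism hypothesis places the occurrences of the next maximal repeat $r$ in corresponding positions of the two texts; the \mrorder fixes which of the (length at most $1$, by Lemma~\ref{lem:lenol}) overlaps with already-collapsed repeats have been resolved in favour of $r$, so the \emph{effective} boundary-shortened form of $r$, and in particular its effective length $\ell$, is the same in both executions. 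Replacing every occurrence of this effective $r$ by a fresh variable then yields isomorphic texts again.

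It remains to turn the isomorphic checkpoints into equal counts. Collapsing an effective-length-$\ell$ maximal repeat requires exactly $\ell-1$ binary merges regardless of the internal order in which \rp chooses the pairs inside $r$ (by Lemma~\ref{lem:pm} each such pair occurs once per occurrence of $r$, so no self-overlap complicates the count), and each merge contributes one size-$2$ rule; hence each collapse adds exactly $2(\ell-1)$ to the size and $\ell-1$ to $N$, quantities that by the induction depend only on the \mrorder. Summing over all collapses shows that $N$ is identical for $E_1$ and $E_2$, and the final texts, being isomorphic, have the same length $L$; therefore $|\Sigma|+2N+L$ coincides. I expect the main obstacle to be the inductive step above, namely justifying rigorously that the effective length $\ell$ of each collapsed repeat is determined by the \mrorder alone and not by the fine-grained pair-selection order; this is exactly the point where the overlap analysis of Theorem~\ref{theo:rpmr} (case (ii)) and Lemma~\ref{lem:lenol} must be combined to show that, once the priority among overlapping most frequent maximal repeats is fixed, the boundary symbol each repeat surrenders is determined.
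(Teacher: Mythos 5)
Your proposal follows essentially the same route as the paper's own proof: it invokes Theorem~\ref{theo:rpmr} to see the pair replacements at each frequency level as collapses of most frequent maximal repeats yielding isomorphic intermediate texts, uses Lemma~\ref{lem:pm} to count $\ell-1$ binary merges (size contribution $2(\ell-1)$) per collapsed repeat, and uses Lemma~\ref{lem:lenol} to argue that the boundary-shortened ``effective'' form of each repeat is fixed by the \mrorder alone. Your explicit accounting $|\Sigma|+2N+L$ and induction over collapses merely formalizes what the paper states more loosely (``these strings are common in the same \mrorder''), and you even flag the same step the paper leaves informal, so the two arguments match in substance.
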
 

\begin{proof}
	Let $T$ be a variable sequence appearing in the grammar generation process of \rp, and
	$f$ be the maximum frequency of pairs in $T$.
	Suppose that $T'$ is a variable sequence generated after \rp replaces every pair occurring $f$ times.
	By Theorem~\ref{theo:rpmr},
	all generated $T'$ are isomorphic to one another, then the length of all of them is the same,
	regardless of the replacement order of pairs.
	Let $r_1$ be the most frequent maximal repeats of $T$ such that $r_1$ is prior to all other ones in this \mrorder.
	$r_1$ is converted into a variable as a result, and
	by Lemma~\ref{lem:pm}, all pairs included in $r_1$ are distinct.
	Then, the size of the subgrammar which exactly derives $r_1$ is $2(|r_1|-1)+1=2|r_1|-1$.
	This holds for the next prioritized maximal repeat (we denote it by $r_2$) with a little difference;
	the pattern actually replaced would be a substring of $r_2$
	excluding the beginning or the end of it, if there are occurrences of overlap with $r_1$.
	However, these strings are common in the same \mrorder,
	then the sizes of generated subgrammars are the same,
	regardless of the selecting order of pairs.
	This similarly holds for all of the most frequent maximal repeats,
	for every maximum frequency of pairs,
	through the whole process of \rp.
\end{proof} 

\subsection{Greatest size difference of \rp}

We consider the problem of how large the difference between possible outcomes of \rp can be.

\begin{Definition}[Greatest Size Difference]
	Let $g$ and $g'$ be sizes of any two possible grammars that can be generated by \rp for a given text.
	Then, the Greatest Size Difference of \rp (GSDRP) is ${\rm max}(|g - g'|)$.
\end{Definition} 

We show a lower bound of above GSDRP in the following theorem.

\begin{theorem}\label{theo:gsgrpomega}
	Given a text with length $n$,
	a lower bound of GSDRP is $\frac{1}{6}(\sqrt{6n+1}+13)$.
\end{theorem}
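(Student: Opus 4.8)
The plan is to prove the lower bound by exhibiting an explicit family of texts for which two legitimate executions of \rp produce grammars whose sizes differ by a prescribed amount, and then solving for that amount as a function of the text length. By Remark~\ref{rem:order} and Theorem~\ref{theo:order}, the size of the grammar produced by \rp is completely determined by the \mrorder, so it suffices to construct, for each parameter $k$, a single text $T_k$ that admits (at least) two distinct \mrorder choices whose grammars differ substantially. The seed of the construction is the text \texttt{bcxdabcyabzdabvbcuda} analysed above, where three mutually overlapping most frequent pairs already force a size gap of $1$; the goal is to chain and nest such overlapping gadgets so that the gap accumulates linearly in $k$ while the length of $T_k$ grows quadratically in $k$.

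First I would fix the two target orders: a \emph{balanced} \mrorder, in which each most frequent maximal repeat is summarized into a single variable cleanly, and an \emph{unbalanced} \mrorder, in which an early choice repeatedly shaves one symbol off the neighbouring repeats — exactly the cascade described in case (ii) of the proof of Theorem~\ref{theo:rpmr} — leaving behind a trail of variables that each occur only once and therefore contribute to the grammar size without compressing anything. I would then compute the two sizes using the accounting already established in the proof of Theorem~\ref{theo:order}: a most frequent maximal repeat of length $\ell$ that is replaced cleanly contributes $2\ell-1$ to the grammar, whereas a repeat that has been shaved at one or both ends contributes according to the shortened substring actually replaced. Summing these contributions over all levels of the construction, together with the leftover single symbols and separators, gives closed forms $g(k)$ for the balanced run and $g'(k)$ for the unbalanced run.

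The difference $D(k)=g'(k)-g(k)$ will be linear in $k$, and the length $n(k)=|T_k|$ will be quadratic in $k$; eliminating $k$ between these two relations and solving the resulting quadratic for $D$ yields $D=\frac{1}{6}(\sqrt{6n+1}+13)$, which is the claimed lower bound on the GSDRP. The final arithmetic is routine once the two closed forms are in hand, so the substance of the proof lies entirely in the design and verification of $T_k$.

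The hard part will therefore be the construction itself rather than the inversion. I must choose the separator symbols and the amount of padding at each level so that (a) all the intended maximal repeats are simultaneously most frequent at the moment \rp reaches them, with no spurious pair ever overtaking them in frequency; (b) both the balanced and the unbalanced \mrorder are genuinely realizable under some tie-breaking rule of \rp, as permitted by Remark~\ref{rem:order}; and (c) the overlaps are confined to a single symbol, as guaranteed in general by Lemma~\ref{lem:lenol}, so that the shaving cascade behaves exactly as intended. Controlling these constraints is precisely what forces each successive level to be longer than the previous one, producing the quadratic length $n(k)$ and hence the $\Theta(\sqrt{n})$ gap; pinning down the constants in the padding is what makes the bound come out to the exact stated value.
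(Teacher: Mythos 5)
Your proposal reproduces the paper's strategy in outline---fix two realizable \mrorder{}s for one explicit text family, show the size gap is linear in a parameter while the text length is quadratic in it, then invert---but it stops exactly where the proof begins: the family $T_k$ is never constructed, and no closed forms $g(k)$, $g'(k)$, $n(k)$ are derived. This is a genuine gap, not a routine omission, because the precise bound $\frac{1}{6}(\sqrt{6n+1}+13)$ is entirely construction-specific: the constants $6$, $1$, and $13$ can only fall out of the arithmetic of a particular text, so asserting that ``eliminating $k$ yields'' this exact expression is quoting the target, not deriving it. Moreover, the design you sketch points in a harder direction than necessary: you propose chaining and \emph{nesting} gadgets so that the shaving of case (ii) of Theorem~\ref{theo:rpmr} \emph{cascades} across levels, with the quadratic length forced by ever-longer levels. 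The paper needs no cascade and no nesting; its construction is a single-level gadget, and the quadratic length comes instead from frequency pumping.

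Concretely, the paper takes $B = l_1xyr_1\,l_2xyr_2\cdots l_fxyr_f$, $L = \diamondsuit l_1x\,\diamondsuit l_2x\cdots\diamondsuit l_fx$, $R = \diamondsuit yr_1\,\diamondsuit yr_2\cdots\diamondsuit yr_f$, with all of $x,y,l_1,\ldots,l_f,r_1,\ldots,r_f$ distinct and every occurrence of $\diamondsuit$ a fresh symbol, and sets $T = BL^{f-1}R^{f-1}$. Then $xy$ and each $l_ix$, $yr_i$ are most frequent maximal repeats of frequency exactly $f$, with $xy$ overlapping each side repeat in the single symbol $x$ or $y$ (consistent with Lemma~\ref{lem:lenol}); the blocks $L^{f-1}R^{f-1}$ exist solely to pump the side repeats' frequencies up to $f$, and their length $2(3f)(f-1)$ is what makes $n = 6f^2-2f$ quadratic. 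Under the \mrorder{} that replaces $xy$ first, $B$ becomes $l_1v_1r_1\cdots l_fv_1r_f$, in which every pair occurs once, so this block of length $3f$ is frozen and the rule $v_1\rightarrow xy$ is wasted, giving $g = 11f+2$; under the opposite \mrorder{}, the $2f$ disjoint side repeats are replaced first, after which $xy$ no longer occurs at all and no rule for it is created, giving $g' = 10f$. The gap is $g-g' = f+2$, and substituting $f = \frac{1}{6}(\sqrt{6n+1}+1)$ gives the stated bound in one step---there is no level-by-level accounting of the kind you anticipate. To turn your writeup into a proof you would have to exhibit a concrete family satisfying your conditions (a)--(c) and carry the accounting through; as it stands, the theorem is unproven, and your prediction that the hard part is ``the construction itself'' is an accurate self-assessment of what is missing.
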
 

\begin{proof}
	Let $B$, $L$, and $R$ be strings such that
	\begin{align*}
		B &= l_1 xy r_1 l_2 xy r_2 
		\cdots \mathit{l_{f{\rm -1}}} xy \mathit{r_{f{\rm -1}}} \mathit{l_f} xy \mathit{r_f}, \\
		L &= \diamondsuit l_1 x\diamondsuit l_2 x \cdots \diamondsuit \mathit{l_f} x, ~~~\\
		R &= \diamondsuit y r_1\diamondsuit y r_2 \cdots \diamondsuit y \mathit{r_f},
	\end{align*} 
  where
  $x,y,l_1, \ldots, l_f, r_1, \ldots, r_f$ denote distinct symbols,
  and each occurrence of $\diamondsuit$ denotes a distinct symbol.
  Consider text $T = BL^{f-1}R^{f-1}$.
	Here, 
	$xy$, $l_1 x$, $\cdots$, $l_f x$, $\mathit{y r_1}$, $\cdots$, $\mathit{y r_f}$
  are the most frequent maximal repeats with frequency $f$ in $T$.
	Let $G$ and $G'$ be grammars generated by \rp for $T$ in different \mrorder, such that
	(i) $xy$ is prior to all other maximal repeats, and
	(ii) $xy$ is posterior to all other maximal repeats, respectively.
	We denote the sizes of $G$ and $G'$ by $g$ and $g'$, respectively.

	First, we consider $G$ and how \rp generates it.
	The first rule generated by replacement is $v_1\rightarrow xy$ because of the \mrorder.
	After replacement, $L$ and $R$ is unchanged but $B$ becomes the following text:
	\begin{equation*}
		B_1 = l_1 v_1 r_1 l_2 v_1 r_2 
		\cdots \mathit{l_{f{\rm -1}}} v_1 \mathit{r_{f{\rm -1}}} \mathit{l_f} v_1 \mathit{r_f}.
	\end{equation*} 
    Each pair in $B_1$ occurs only once in the whole text $B_1L^{f-1}R^{f-1}$.
	This means that $B_1$ is never be shortened from the current length, $3f$.
	In the remaining steps, each $\mathit{l_i}x$ and $y\mathit{r_i}$ (for $i = 1, \cdots, f$) is replaced,
	and that is all.
	$L$ and $R$ changed to texts whose length are both $2f$.
	Hence, the following holds:
	\begin{equation}
		g = 3f+2\cdot 2f+2(1+2f)=11f+2.\label{eq:size1}
	\end{equation} 

	Next, we consider $G'$ and how \rp generates it.
	By its \mrorder, 
	$l_1 x$, $\cdots$, $l_f x$, $\mathit{y r_1}$, $\cdots$, $\mathit{y r_f}$
	are replaced before $xy$ is selected.
	They do not overlap with each other, and after they are replaced,
	$xy$ does not occur in the generated text.
	Therefore, in $G'$,
	there are $2f$ rules which derive 
	each $\mathit{l_i}x$ and $y\mathit{r_i}$ (for $i = 1, \cdots, f$), respectively,
	but the rule which derives $xy$ is absent.
	$L$ and $R$ changed to texts whose length are both $2f$,
	and $B$ changed to a text with length $2f$.
	Hence, the following holds:
	\begin{equation}
		g' = 2f+2\cdot 2f+2\cdot2f=10f.\label{eq:size2}
	\end{equation} 

	Let us denote the length of the original text $T=BL^{f-1}R^{f-1}$ by $n$.
	Then, the following holds:
	\begin{align*}
		n = 4f + 2(3f)(f-1) = 6f^2-2f.~~~
	\end{align*} 
	Therefore, 
	\begin{align}
		f = \frac{1}{6}(\sqrt{6n+1}+1)\label{eq:length}
	\end{align} 
	holds.
	By~(\ref{eq:size1}),~(\ref{eq:size2}),~and~(\ref{eq:length}),
	\begin{align*}
		g - g' &= 11f+2 - 10f = f+2\\
		&=\frac{1}{6}(\sqrt{6n+1}+13)
	\end{align*} 
	holds and the theorem follows.
\end{proof}

\section{\mr}\label{sec:mrrepair}

The main strategy of our proposed method is to
recursively replace the most frequent maximal repeats,
instead of the most frequent pairs.

In this section,
first,
we explain the na\"{\i}ve version of our method called \nmr.
While it can have bad performance in specific cases,
it is simple and helpful to understand our main result.
Then, we describe our proposed \mr.

\subsection{\nmr}

\begin{Definition}[\nmr]\label{def:nmr}
  For input text $T$,
	let $G=\set{V,\Sigma ,S,R}$ be the grammar generated by \nmr.
	\nmr constructs $G$ by the following steps:\\

	\noindent
	{\bf Step~1.} Replace each symbol $a\in \Sigma$ with a new variable $v_a$ and add $v_a\rightarrow a$ to $R$.\\
	{\bf Step~2.} Find the most frequent maximal repeat $r$ in $T$.\\
  {\bf Step~3.} Replace every occurrence (or, as many occurrences as possible, when there are overlaps) 
  of $r$ in $T$ with a new variable $v$,
	and then add $v \rightarrow r$ to $R$.\\
	{\bf Step~4.} Re-evaluate the frequencies of maximal repeats
	for the renewed text generated in {\bf Step~3}.
	If the maximum frequency is 1,
	add $S \rightarrow {\rm (current~text)}$ to $R$,
	and terminate.
	Otherwise, return to {\bf Step~2}. 
\end{Definition}

We can easily extend the concept of \mrorder
to this \nmr.

\begin{figure*}[tb]
	\centering
	\begin{tabular}{c|c|c|c|c|c|c|c|c|c|c|c|}
		\cline{2-12}
		                                                                                           & {\tt a}                    & {\tt b}           & {\tt r}           & {\tt a}           & {\tt c}           & {\tt a}           & {\tt d}           & {\tt a}           & {\tt b}           & {\tt r}           & {\tt a}           \\ \cline{2-12} \noalign{\vspace{2pt}} \hline
		\multicolumn{1}{|c||}{$v_{\alpha} \rightarrow \alpha~(\alpha = {\rm {\tt a,b,r,c,d}})$}
		                                                                                           & $v_{\rm {\tt a}}$          & $v_{\rm {\tt b}}$ & $v_{\rm {\tt r}}$ & $v_{\rm {\tt a}}$ & $v_{\rm {\tt c}}$ & $v_{\rm {\tt a}}$ & $v_{\rm {\tt d}}$ & $v_{\rm {\tt a}}$ & $v_{\rm {\tt b}}$ & $v_{\rm {\tt r}}$ & $v_{\rm {\tt a}}$ \\ \hline
		\multicolumn{1}{|c||}{$v_1 \rightarrow v_{\rm {\tt a}}v_{\rm {\tt b}}v_{\rm {\tt r}}v_{\rm {\tt a}}$}
		                                                                                           & \multicolumn{4}{c|}{$v_1$} & $v_{\rm {\tt c}}$ & $v_{\rm {\tt a}}$ & $v_{\rm {\tt d}}$
		                                                                                           & \multicolumn{4}{c|}{$v_1$}                                                                                                                                                                                                         \\ \hline
		\multicolumn{1}{|c||}{$S \rightarrow v_1v_{\rm {\tt c}}v_{\rm {\tt a}}v_{\rm {\tt d}}v_1$} & \multicolumn{11}{c|}{$S$}                                                                                                                                                                                                          \\ \hline
	\end{tabular}
	\caption{An example of the grammar generation process of \nmr for text {\tt abracadabra}.
	The generated grammar is
	$\set{
		\set{v_{\rm {\tt a}},v_{\rm {\tt b}},v_{\rm {\tt r}},v_{\rm {\tt c}},v_{\rm {\tt d}}, v_1, S},
		\set{{\rm {\tt a},{\tt b},{\tt r},{\tt c},{\tt d}}}, S,
		\set{v_{\rm {\tt a}}\rightarrow {\rm {\tt a}},
		v_{\rm {\tt b}}\rightarrow {\rm {\tt b}},
		v_{\rm {\tt r}}\rightarrow {\rm {\tt r}},
		v_{\rm {\tt c}}\rightarrow {\rm {\tt c}},
		v_{\rm {\tt d}}\rightarrow {\rm {\tt d}},
		v_1 \rightarrow v_{\rm {\tt a}}v_{\rm {\tt b}}v_{\rm {\tt r}}v_{\rm {\tt a}},
		S \rightarrow v_1v_{\rm {\tt c}}v_{\rm {\tt a}}v_{\rm {\tt d}}v_1}
		}$,
	and the grammar size is $14$.
	}
	\label{fig:abranmr}
\end{figure*}

We show an example of the grammar generation
process of \nmr in Figure~\ref{fig:abranmr}.
Figure~\ref{fig:abrarp}~and~\ref{fig:abranmr} illustrates
the intuitive reason why the strategy using maximal repeats is effective
compared to that using pairs.
When compressing text
$v_{\tt a}v_{\tt b}v_{\tt r}v_{\tt a}v_{\tt c}v_{\tt a}v_{\tt d}v_{\tt a}v_{\tt b}v_{\tt r}v_{\tt a}$,
\rp and \nmr both generate subgrammars which derive the most frequent maximal repeat
$v_{\tt a}v_{\tt b}v_{\tt r}v_{\tt a}$.
The rule set of the subgrammar of \rp is
$\{v_1 \rightarrow v_{\tt a}v_{\tt b}, v_2 \rightarrow v_1v_{\tt r}, v_3 \rightarrow v_2v_{\tt a} \}$,
and the size is 6.
On the other hand,
the rule set of subgrammar of \nmr is
$\{v_1 \rightarrow v_{\tt a}v_{\tt b}v_{\tt r}v_{\tt a}\}$,
and the size is 4.

However, the following theorem indicates that
the size of the grammar generated by \nmr is larger than that by \rp
in particular cases, even when they work in the same \mrorder.

\begin{theorem}\label{theo:naive}
	Given a text $T$ with length $n$, and
	assume that \rp and \nmr work in the same \mrorder.
	Let $\gsize{rp}$ and $\gsize{nmr}$ be sizes of
	grammars generated by \rp and \nmr for $T$, respectively.
	Then, there is a case where $\gsize{nmr} = \gsize{rp} + \ord(\log n)$ holds.%
	\footnote{We show a concrete example of this theorem in Appendix.}
\end{theorem}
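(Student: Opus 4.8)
The plan is to exhibit a single family of texts, parametrized by an integer $k$, for which $\gsize{nmr}$ exceeds $\gsize{rp}$ by exactly $\Theta(\log n)$, where $n$ is the length of the text. The intuition, already visible in the \texttt{abracadabra} example, runs in the \emph{opposite} direction to the usual advantage of \nmr: \nmr replaces each maximal repeat by a single fresh variable, so a rule deriving a maximal repeat of length $m$ contributes $m$ to the grammar size all at once, whereas \rp builds that same repeat by a cascade of pair-rules, reusing previously created variables. When a long maximal repeat is itself built up from shorter maximal repeats that \rp has already captured as variables, \rp pays only $2$ per binary combination while \nmr re-spells the whole repeat in terms of the original symbols. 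So I would look for a text whose most frequent maximal repeat, at each stage, is the \emph{concatenation of two copies of the previously replaced repeat}, producing a doubling structure of depth $\Theta(\log n)$.

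Concretely, I would first construct the text. A natural candidate is built around a string whose longest maximal repeat has a nested, self-similar doubling structure, e.g.\ letting $w_0$ be a single symbol and $w_{i+1} = w_i w_i$ (or a lightly perturbed variant $w_{i+1}=w_i c_i w_i$ with distinct separators $c_i$ to control which substrings are maximal), then embedding $w_k$ into a surrounding context chosen so that at every level exactly one most frequent maximal repeat exists and the \mrorder\ is forced. The point of forcing uniqueness of the most frequent maximal repeat at each step — appealing to Theorem~\ref{theo:rpmr} and Theorem~\ref{theo:order} — is to make the behaviour of both algorithms deterministic and directly comparable under the same \mrorder, so that the only difference in grammar size comes from the internal encoding of each replaced repeat, not from a divergence in which repeats are chosen.

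Next I would track the two grammar sizes step by step. At the stage where both algorithms replace the level-$i$ repeat $w_i$ (of length $\Theta(2^i)$), \rp\ produces it via a single pair-rule $v_i \rightarrow v_{i-1} v_{i-1}$ (or with one extra symbol for a separator), contributing only $O(1)$ to $\gsize{rp}$ at that level, so that the whole cascade costs $\gsize{rp} = O(k)$ beyond the base rules. \nmr, by contrast, cannot reuse the level-$(i-1)$ variable inside the level-$i$ rule, because it replaces $w_i$ directly by spelling out its derivation, so the level-$i$ rule has right-hand side of length $\Theta(2^i)$; this is where the difference accumulates. I would sum the per-level contributions: the dominant cost for \nmr\ comes from the top few levels, but the net \emph{difference} $\gsize{nmr}-\gsize{rp}$ should telescope to $\Theta(k) = \Theta(\log n)$, since $n = \Theta(2^k)$ gives $k = \Theta(\log n)$. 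The arithmetic here is routine once the text is pinned down, and the footnote already promises a concrete example in an appendix, so I would simply verify that the chosen family yields the claimed $\ord(\log n)$ gap.

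The main obstacle, and the step deserving the most care, is the combinatorial design of the text so that \emph{at every stage the intended repeat is genuinely the unique most frequent maximal repeat}, and that replacing it does not create spurious new maximal repeats or destroy the self-similarity needed at the next level. A pure $w_{i+1}=w_iw_i$ construction is delicate because overlapping occurrences and the as-many-as-possible replacement rule for self-overlapping pairs (Step~3) can interfere; Lemma~\ref{lem:lenol} bounds overlaps to length $1$ for most frequent maximal repeats, which helps, but I would still need distinct separator symbols and enough surrounding copies to keep each level's frequency strictly maximal. Verifying these maximality and frequency conditions at each of the $\Theta(\log n)$ levels — rather than the final size computation — is the real content of the proof, and I would discharge it by an explicit inductive description of the text after each replacement, checking left- and right-maximality directly from the definition.
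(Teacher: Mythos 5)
There is a genuine gap here, and it lies in the central mechanism you propose. You claim that \nmr ``cannot reuse the level-$(i-1)$ variable inside the level-$i$ rule, because it replaces $w_i$ directly by spelling out its derivation, so the level-$i$ rule has right-hand side of length $\Theta(2^i)$.'' This contradicts Definition~\ref{def:nmr}: \nmr, exactly like \rp, operates on the \emph{renewed} text after each round, so once $w_i$ has been replaced by a fresh variable $z_i$, the next most frequent maximal repeat is a short string over the current variables, and the added rule is written in those variables. Under your separator variant $w_{i+1} = w_i c_i w_i$, \nmr adds the rule $z_{i+1} \rightarrow z_i c_i z_i$ of size $3$, while \rp needs two pair rules of total size $4$ for the same level; under pure doubling $w_{i+1} = w_i w_i$ both pay $2$. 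So with the algorithm applied correctly, your family gives $\gsize{nmr} \le \gsize{rp}$ --- a gap in the wrong direction. The proposal is also internally inconsistent: if the level-$i$ \nmr rule really had size $\Theta(2^i)$, the difference would be dominated by the top level and sum to $\Theta(2^k) = \Theta(n)$, not the $\Theta(k) = \Theta(\log n)$ you assert it ``telescopes'' to.

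The paper's actual construction turns the very feature you engineered away --- self-overlap --- into the engine of the gap. It takes $T' = (uw)^{2^{m+1}-1}u$, whose most frequent maximal repeat $uwu$ overlaps itself in the single symbol $u$ (consistent with Lemma~\ref{lem:lenol}). Because of these overlaps, \nmr can replace only every other occurrence, and the renewed text $(z_1 w)^{2^m-1} z_1$ has the same shape, so the situation recurs for $m+1$ levels. Crucially, the interior string $w$ is never a maximal repeat (every occurrence is flanked by the current $z_i$ on both sides, so it is not left-maximal) and hence never receives its own variable under \nmr; it is re-spelled, $|w|$ symbols at a time, in each nested rule $z_{i+1} \rightarrow z_i w z_i$, giving total size $(|w|+2)(m+1)$. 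Meanwhile \rp compresses $uw$ once (cost $2|w|$) and then handles the repetition by doubling rules of constant size, for total $3m + 2|w| + 2$; the difference $(m-1)(|w|-1)-1$ is $\Theta(\log n)$ for constant $|w|$ since $n = \Theta(2^m)$. You explicitly noted that overlapping occurrences ``can interfere'' and added distinct separators to suppress them --- but that suppression removes exactly the phenomenon (forced partial replacement plus a never-maximal interior string) that makes \nmr lose, which is why your construction cannot be repaired without changing its basic design.
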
 

\begin{proof}
	Assume that $\gram{rp} = \set{\var{rp}, \sym{rp}, \stt{rp}, \rul{rp}}$
	and $\gram{nmr} = \set{\var{nmr}, \sym{nmr}, \stt{nmr}, \rul{nmr}}$
	are grammars generated by \rp and \nmr, respectively.
	Let $T' = v_1\cdots v_n$ such that
	$v_i\in\var{rp}\cap\var{nmr}$ and $v_i \rightarrow T[i]\in\rul{rp}\cap\rul{nmr}$ (for $i = 1,\cdots , n$),
	and
	$\sgram{rp} = \set{\svar{rp}, \ssym{rp}, \sstt{rp}, \srul{rp}}$
	(or $\sgram{nmr} = \set{\svar{nmr}, \ssym{nmr}, \sstt{nmr}, \srul{nmr}}$)
	be a subgrammar of $\gram{rp}$ (or $\gram{nmr}$)
	which derives $T'$.
	Assume that
	$T' = (uw)^{2^{m+1}-1} u$,
	where $u\in \var{rp}\cap\var{nmr}$, $w\in (\var{rp}\cap\var{nmr})^{+}$
	such that $uwu$ is the most frequent maximal repeat of $T'$,
	and $m \in \mathbb{N}^{+}$.
	Note that $2^{m+1}-1 = \sum_{i=0}^{m}2^i$.
	Here $\srul{rp}$ and $\srul{nmr}$ consist as follows:
	\vspace{6pt}
	\begin{description}
		\item[$\srul{rp}$:]
		      Assume that $x_i\in \svar{rp}$ for $1\le i\le m$
		      and $y_j\in \svar{rp}\cup\ssym{rp}$ for $1\le j\le |w|$, then
		      \begin{itemize}
			      \item $|w|$ rules
			            $y_j\rightarrow y_ly_r$ with $\derive{y_{|w|}} = uw$.
			      \item
			            One rule $x_1\rightarrow y_{|w|}y_{|w|}$ and
			            $\log_2{\lfloor 2^{m+1}-1\rfloor}-1 = m-1$ rules
			            $x_i\rightarrow x_{i-1}x_{i-1}$ for $2\le i\le m$.
			      \item One rule
			            $\sstt{rp}\rightarrow x_{m} x_{m-1}\cdots x_{1}y_{|w|}$.
		      \end{itemize}
		\item[$\srul{nmr}$:]
		      Assume that $d = |\svar{nmr}| = |\srul{nmr}|$ and $z_i\in \svar{nmr}$ for $1\le i\le d$, then
		      \begin{itemize}
			      \item One rule $z_1 \rightarrow uwu$.
			      \item $d - 1$ rules
			            $z_i \rightarrow z_{i-1}wz_{i-1}$ for $2\le i\le d$
			            and $z_d = \sstt{nmr}$.
		      \end{itemize}
	\end{description}
	\vspace{6pt}
	Let $\sgsize{rp}$ and $\sgsize{nmr}$ be sizes of $\sgram{rp}$ and $\sgram{nmr}$, respectively.
	Then, the following holds:
	\begin{align}
		\sgsize{rp}  & = 2|w|+2m+(m+2)=3m+2|w|+2\label{eq:rpsize}        \\
		\sgsize{nmr} & = |w|+2+(|w|+2)(d-1) = (|w|+2)d\label{eq:nmrsize}
	\end{align}
	Here, with regard to the length of $T'$, the following holds:
	\begin{equation*}
		(2^d-1)|w|+2^d = n = (2(2^m-1)+1)(|w|+1)+1.
	\end{equation*}
	Since the right-side results in $2^{m+1}|w|+2^{m+1}$,
	$d = m+1$ follows it.
	Hence, by equation~(\ref{eq:rpsize}) and (\ref{eq:nmrsize}), the following holds:
	\begin{equation*}
		\sgsize{nmr}-\sgsize{rp}=(m - 1)(|w| - 1) - 1.
	\end{equation*}
	Therefore, $\sgsize{nmr} > \sgsize{rp}$ holds for some $(m, |w|)$,
	and the proposition holds.
\end{proof}

\subsection{\mr}

The reason why the grammar size of \nmr becomes larger than that of \rp as seen in Theorem~\ref{theo:naive} is
that \nmr cannot replace all occurrences of the most frequent maximal repeats
if it overlaps with another occurrence of itself.
In the remainder of this section,
we describe \mr, which is an improved version of the above \nmr.

\begin{Definition}[\mr]\label{def:mr}
  For input text $T$,
  let $G=\set{V,\Sigma ,S,R}$ be the grammar generated by \mr.
  \mr constructs $T$ by the following steps:\\

\vspace{-10pt}
	\noindent
	{\bf Step~1.} Replace each symbol $a\in \Sigma$ with a new variable $v_a$ and add $v_a\rightarrow a$ to $R$.\\
	{\bf Step~2.} Find the most frequent maximal repeat $r$ in $T$.\\
	{\bf Step~3.} Check if $|r|>2$ and $r[1]=r[|r|]$, and
	if so, replace $r$ with $r[2..|r|]$. \\
	{\bf Step~4.} Replace every occurrence of $r$ with a new variable $v$,
	then add $v \rightarrow r$ to $R$.\\
	{\bf Step~5.} Re-evaluate the frequencies of maximal repeats
	for the renewed text generated in {\bf Step~4}.
	If the maximum frequency is 1,
	add $S \rightarrow {\rm (current~text)}$ to $R$, and terminate.
	Otherwise, return to {\bf Step~2}. 
\end{Definition}

\begin{figure*}[tb]
	\centering
	\begin{tabular}{c|c|c|c|c|c|c|c|c|c|c|c|}
		\cline{2-12}
		                                                                                           & {\tt a}                    & {\tt b}           & {\tt r}           & {\tt a}           & {\tt c}           & {\tt a}           & {\tt d}           & {\tt a}           & {\tt b}           & {\tt r}           & {\tt a}           \\ \cline{2-12} \noalign{\vspace{2pt}} \hline
		\multicolumn{1}{|c||}{$v_{\alpha} \rightarrow \alpha~(\alpha = {\rm {\tt a,b,r,c,d}})$}
		                                                                                           & $v_{\rm {\tt a}}$          & $v_{\rm {\tt b}}$ & $v_{\rm {\tt r}}$ & $v_{\rm {\tt a}}$ & $v_{\rm {\tt c}}$ & $v_{\rm {\tt a}}$ & $v_{\rm {\tt d}}$ & $v_{\rm {\tt a}}$ & $v_{\rm {\tt b}}$ & $v_{\rm {\tt r}}$ & $v_{\rm {\tt a}}$ \\ \hline
		\multicolumn{1}{|c||}{$v_1 \rightarrow v_{\rm {\tt a}}v_{\rm {\tt b}}v_{\rm {\tt r}}$}
		                                                                                           & \multicolumn{3}{c|}{$v_1$} & $v_{\rm {\tt a}}$ & $v_{\rm {\tt c}}$ & $v_{\rm {\tt a}}$ & $v_{\rm {\tt d}}$
		                                                                                           & \multicolumn{3}{c|}{$v_1$} & $v_{\rm {\tt a}}$                                                                                                                                                                                     \\ \hline
		\multicolumn{1}{|c||}{$v_2 \rightarrow v_1v_{\rm {\tt a}}$}
		                                                                                           & \multicolumn{4}{c|}{$v_2$} & $v_{\rm {\tt c}}$ & $v_{\rm {\tt a}}$ & $v_{\rm {\tt d}}$
		                                                                                           & \multicolumn{4}{c|}{$v_2$}                                                                                                                                                                                                         \\ \hline
		\multicolumn{1}{|c||}{$S \rightarrow v_2v_{\rm {\tt c}}v_{\rm {\tt a}}v_{\rm {\tt d}}v_2$} & \multicolumn{11}{c|}{$S$}                                                                                                                                                                                                          \\ \hline
	\end{tabular}
	\caption{An example of the grammar generation process of \mr for text {\tt abracadabra}.
	The generated grammar is
	$\set{
		\set{v_{\rm {\tt a}},v_{\rm {\tt b}},v_{\rm {\tt r}},v_{\rm {\tt c}},v_{\rm {\tt d}}, v_1, S},
		\set{{\rm {\tt a},{\tt b},{\tt r},{\tt c},{\tt d}}}, S,
		\set{v_{\rm {\tt a}}\rightarrow {\rm {\tt a}},
		v_{\rm {\tt b}}\rightarrow {\rm {\tt b}},
		v_{\rm {\tt r}}\rightarrow {\rm {\tt r}},
		v_{\rm {\tt c}}\rightarrow {\rm {\tt c}},
		v_{\rm {\tt d}}\rightarrow {\rm {\tt d}},
		v_1 \rightarrow v_{\rm {\tt a}}v_{\rm {\tt b}}v_{\rm {\tt r}},
		v_2 \rightarrow v_1v_{\rm {\tt a}},
		S \rightarrow v_2v_{\rm {\tt c}}v_{\rm {\tt a}}v_{\rm {\tt d}}v_2}
		}$,
	and the grammar size is $15$.
	}%
	\vspace{-10pt}
	\label{fig:abramr}
\end{figure*}

We show an example of the grammar generation process of \mr in Figure~\ref{fig:abramr}.
We can easily extend the concept of \mrorder to this \mr.
We do not care if it uses $r[1..|r-1|]$ in {\bf Step 3}, instead of $r[2..|r|]$.
\mr can replace all occurrences of $r$
even if it overlaps with itself in some occurrences,
since by Lemma~\ref{lem:lenol},
the length of overlaps of the most frequent maximal repeats is at most 1.
If $r[1] = r[|r|]$ but $r$ does not overlap with itself,
then $r[1]v$ becomes the most frequent maximal repeat after $r[2..|r|]$ is replaced by $v$,
and $r[1]v$ would be replaced immediately.
\mr still cannot replace all of them if $|r|=2$,
but the same is said to \rp.

We show an example of the grammar generation process of \mr in Figure~\ref{fig:abramr}.
Although the size of the generated grammar in Figure~\ref{fig:abramr} is larger than that of \nmr
shown in Figure~\ref{fig:abranmr},
it is still smaller than that of \rp shown in Figure~\ref{fig:abrarp}.

\begin{theorem}\label{theo:ratio}
	Assume that \rp and \mr work based on the same \mrorder for a given text.
	Let $\gsize{rp}$ and $\gsize{mr}$ be sizes of grammars
	generated by \rp and \mr, respectively.
	Then,
	$\frac{1}{2} \gsize{rp} < \gsize{mr} \le \gsize{rp}$ holds.
\end{theorem}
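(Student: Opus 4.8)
The plan is to build a rule-for-rule correspondence between the two grammars that is driven entirely by the common \mrorder. By Theorem~\ref{theo:rpmr} and the discussion of \mrorder, whenever \mr collapses a most frequent maximal repeat $r$ into a single variable in one step, \rp reaches the very same reduced text by replacing the pairs inside the corresponding occurrences of $r$ one after another; Theorem~\ref{theo:order} guarantees that the internal choice among equally frequent pairs is immaterial to the accumulated size. Consequently, once a given maximal repeat has been fully summarized, the variable sequences maintained by the two algorithms coincide up to isomorphism. In particular the two runs traverse the same maximal repeats in the same order, terminate on the same final text, and share the symbol rules $v_a\rightarrow a$ produced in {\bf Step~1}. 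I would therefore split each grammar into (i) the symbol rules, of common total size $\sigma:=|\Sigma|$; (ii) the start rule $S\rightarrow(\text{final text})$, whose right-hand side has the same length $s$ in both grammars; and (iii) the rules created while summarizing maximal repeats.

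For group (iii) I would invoke a simple invariant: contracting $\ell$ adjacent items into one variable requires exactly $\ell-1$ binary combinations, independently of how they are grouped. Thus, if $\ell_1,\dots,\ell_N$ are the right-hand-side lengths of the rules that \mr introduces for the maximal repeats (accounting for the at-most-one-symbol shortening performed in {\bf Step~3}, which is legitimate by Lemma~\ref{lem:lenol}), then reproducing the same contractions with pairs costs \rp exactly $\sum_{j=1}^{N}(\ell_j-1)$ binary rules of size $2$. Hence
\begin{align*}
	\gsize{mr} &= \sigma + s + \sum_{j=1}^{N}\ell_j, &
	\gsize{rp} &= \sigma + s + 2\sum_{j=1}^{N}(\ell_j-1).
\end{align*}

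Both inequalities then follow by elementary arithmetic on these identities. Every maximal repeat has length at least $2$, so each $\ell_j\ge 2$ and $2(\ell_j-1)\ge \ell_j$; summing gives $\gsize{rp}-\gsize{mr}=\sum_{j}(\ell_j-2)\ge 0$, that is $\gsize{mr}\le\gsize{rp}$. For the lower bound I would compute $2\gsize{mr}-\gsize{rp}=\sigma+s+2N$, which is strictly positive because a nonempty text forces $\sigma\ge 1$ and $s\ge 1$; this rearranges to $\frac{1}{2}\gsize{rp}<\gsize{mr}$.

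The part I expect to be genuinely delicate is the first paragraph, namely turning the claim that the two runs stay synchronized into a rigorous statement rather than an appeal to intuition. The awkward cases are the self-overlapping repeat, where {\bf Step~3} shortens $r$ and \rp is left with one residual symbol per run, and the situation $r[1]=r[|r|]$ without an actual overlap, where \mr splits the work into two rules. In each case one must verify that the total number of binary combinations performed by \rp still equals $\sum_{j}(\ell_j-1)$ and that the post-replacement texts agree. Once this bookkeeping is settled, the size identities, and hence the theorem, are immediate.
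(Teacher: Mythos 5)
Your proposal is correct and follows essentially the same route as the paper: the paper also synchronizes the two runs via Theorem~\ref{theo:rpmr} (organizing the accounting by the successive maximum frequencies $f_1, f_2, \ldots, f_k$ and the common replaced-substring lengths $l_i^{(f_j)} \ge 2$, which play the role of your $\ell_j$), derives $\sgsize{rp}^{(f_j)} = \sum_i 2\bigl(l_i^{(f_j)}-1\bigr)$ versus $\sgsize{mr}^{(f_j)} = \sum_i l_i^{(f_j)}$, and finishes with the same elementary arithmetic after adding the shared $|\Sigma|$ and start-rule contributions. The synchronization bookkeeping you flag as delicate (self-overlaps shortened in {\bf Step~3}, and the $r[1]=r[|r|]$ split into two rules) is treated in the paper at the same level of rigor, via the isomorphism of the renewed texts and Lemma~\ref{lem:lenol}, so your argument is not missing anything relative to the published one.
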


\begin{proof}
	Assume that
	$\gram{rp} = \set{\var{rp}, \sym{rp}, \stt{rp}, \rul{rp}}$
	and $\gram{mr} = \set{\var{mr}, \sym{mr}, \stt{mr}, \rul{mr}}$
	are grammars generated by \rp and \mrn, respectively,
	for a given text $T$ with length $n$.
	Let $T' = v_1\cdots v_n$ such that
	$v_i\in\var{rp}\cap\var{mr}$ and $v_i \rightarrow T[i]\in\rul{rp}\cap\rul{mr}$ (for $i = 1,\cdots , n$).

	We start with $T'$.
	Let $f_1$ be the maximum frequency of maximal repeats in $T'$.
	By Corollary~\ref{coro:pm}, the maximum frequency of pairs in $T'$ is also $f_1$.
	Let $\sgram{rp}^{(f_1)}$ (or $\sgram{mr}^{(f_1)}$) be a subgrammar of $\gram{rp}$ (or $\gram{mr}$)
	which is generated while \rp (or \mr) replaces pairs (or maximal repeats) with frequency $f_1$,
	$\sgsize{rp}^{(f_1)}$ (or $\sgsize{mr}^{(f_1)}$) be the size of it, and
	$T^{(f_1)}_{\mathit{rp}}$ (or $T^{(f_1)}_{\mathit{mr}}$) be the renewed text
	after all pairs (or maximal repeats) with frequency $f_1$ are replaced.
	Let $r_1^{(f_1)}, \cdots, r_{m_1}^{(f_1)}$ be maximal repeats with frequency $f_1$ in $T'$,
	and assume that they are prioritized in this order by the \mrorder.
	Let each $l_i^{(f_1)}$ (for $i = 1,\cdots ,m_1$) be the length of the longest substring of $r_i^{(f_1)}$
	such that there are variables that derive the substring in both $\sgram{rp}^{(f_1)}$ and $\sgram{mr}^{(f_1)}$.
	Note that this substring is common to \rp and \mr,
	and each $l_i^{(f_1)}$ is at least 2.
	Then, by Lemma~\ref{lem:pm}, the following holds:
	\begin{align*}
		\sgsize{rp}^{(f_1)} &= \sum_{i=1}^{m_1} 2(l_i^{(f_1)}-1)~,~~~\\
		\sgsize{mr}^{(f_1)} &= \sum_{i=1}^{m_1} l_i^{(f_1)}. ~~~
	\end{align*}
	Therefore,
	\begin{align}
		\therefore ~ \frac{1}{2}\sgsize{rp}^{(f_1)} < \sgsize{mr}^{(f_1)} \le \sgsize{rp}^{(f_1)}\label{eq:f1}
	\end{align}
	holds.
	The renewed texts $T^{(f_1)}_{\mathit{rp}}$ and $T^{(f_1)}_{\mathit{mr}}$ are isomorphic for $\var{rp}$ and $\var{mr}$.
	Let $f_2$ be the maximum frequency of maximal repeats in $T^{(f_1)}_{\mathit{rp}}$
	(and this is the same in $T^{(f_1)}_{\mathit{mr}}$).
	Then, the similar discussion holds for $\sgram{rp}^{(f_2)}$ and $\sgram{mr}^{(f_2)}$.
	Hence, $\frac{1}{2}\sgsize{rp}^{(f_2)} < \sgsize{mr}^{(f_2)} \le \sgsize{rp}^{(f_2)}$ holds
	similarly to (\ref{eq:f1}),
	and the renewed texts $T^{(f_2)}_{\mathit{rp}}$ and $T^{(f_2)}_{\mathit{mr}}$ are isomorphic.
	Inductively, for every maximum frequency of maximal repeats $f_i$,
	$\frac{1}{2}\sgsize{rp}^{(f_i)} < \sgsize{mr}^{(f_i)} \le \sgsize{rp}^{(f_i)}$ holds
	and the renewed texts $T^{(f_i)}_{\mathit{rp}}$ and $T^{(f_i)}_{\mathit{mr}}$ are isomorphic.
	Let $k$ be a natural number such that $f_k > 1$ and $f_{k+1} = 1$,
	which is the number of decreasing of maximum frequency through the whole process of \rp and \mr.
	Then,
	\begin{align}
		\gsize{rp} & = \sum_{j=1}^{k}\sgsize{rp}^{(f_j)} + |\Sigma| + |T^{(f_k)}_{\mathit{rp}}| 
		           = \sum_{j=1}^{k}\sum_{i=1}^{m_j} 2(l_i^{(f_j)}-1) + |\Sigma| + |T^{(f_k)}_{\mathit{rp}}|~,\label{eq:grp} \\
		\gsize{mr} & = \sum_{j=1}^{k}\sgsize{mr}^{(f_j)} + |\Sigma| + |T^{(f_k)}_{\mathit{mr}}| 
				   = \sum_{j=1}^{k}\sum_{i=1}^{m_j} l_i^{(f_j)} + |\Sigma| + |T^{(f_k)}_{\mathit{mr}}|~\label{eq:gmr}
	\end{align}
	holds.
	Because every $l_i^{(f_j)} \ge 2$ and $|T^{(f_k)}_{\mathit{rp}}| = |T^{(f_k)}_{\mathit{mr}}|$,
	$\frac{1}{2}\gsize{rp} < \gsize{mr} \le \gsize{rp}$
	follows (\ref{eq:grp}) and (\ref{eq:gmr})
	and the proposition holds.
	$\gsize{mr} = \gsize{rp}$ holds when every length of $l_i^{(f_j)}$ is 2.
\end{proof}

The following theorem shows that
unless the \mrorder of \rp and \mr are the same,
the grammar generated by \mr might be larger than that by \rp.

\begin{theorem}
	Unless the \mrorder of \rp and \mr are the same,
	there is a case where
	the size of the generated grammar by \mr becomes larger than that by \rp.
\end{theorem}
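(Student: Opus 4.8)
The plan is to prove the statement by exhibiting a single witness text together with two differing \mrorder choices, one followed by \rp and a different one followed by \mr, under which \mr produces a strictly larger grammar. The natural candidate is exactly the family $T = BL^{f-1}R^{f-1}$ built in the proof of Theorem~\ref{theo:gsgrpomega}, because there the two extreme \mrorder choices already force a size gap inside \rp alone: priority to $xy$ yields size $11f+2$ (see~(\ref{eq:size1})), while demoting $xy$ to last yields size $10f$ (see~(\ref{eq:size2})).

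First I would record the key structural fact about this construction: every most frequent maximal repeat that ever arises — $xy$, the $l_ix$, and the $yr_i$ — has length exactly $2$, and no maximal repeat of length greater than $2$ is ever created during the process (the $\diamondsuit$ separators are pairwise distinct, and each residual pair of $B_1$ occurs only once, so no longer repeat can form). On a length-$2$ maximal repeat, Step~3 of \mr never triggers, since its two symbols are distinct and hence $r[1]\neq r[|r|]$, and Step~4 replaces it precisely as \rp replaces the corresponding pair. Restricted to this input, therefore, \mr and \rp perform identical replacements whenever they share an \mrorder, and so generate grammars of identical size. This is fully consistent with Theorem~\ref{theo:ratio}: here every relevant length equals $2$, and the bound $\tfrac12\gsize{rp}<\gsize{mr}\le\gsize{rp}$ collapses to $\gsize{mr}=\gsize{rp}$ because $2(2-1)=2$.

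Next I would simply assign the two \mrorder choices from Theorem~\ref{theo:gsgrpomega}: let \rp run in the \mrorder in which $xy$ is replaced last, so $\gsize{rp}=10f$ by~(\ref{eq:size2}); and let \mr run in the \mrorder in which $xy$ is replaced first, so $\gsize{mr}=11f+2$ by~(\ref{eq:size1}) combined with the coincidence argument above. These two \mrorder choices are distinct, and $11f+2>10f$ for every $f\ge 1$, whence $\gsize{mr}>\gsize{rp}$. This establishes the theorem, and it suffices, since the claim is purely existential: I need only compare \mr in one bad \mrorder against \rp in one good \mrorder, not quantify over all differing orders.

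The main obstacle I anticipate is not the arithmetic but making the coincidence argument watertight, namely verifying step for step that \mr on this input reproduces exactly the \rp replacement sequence in the chosen \mrorder. In particular I must confirm that after $xy$ is contracted the frequencies of the $l_ix$ and $yr_i$ drop to $f-1$ but remain at least $2$ (so I should take $f\ge 3$), that these remain length-$2$ maximal repeats, and that the residual text $B_1$ contributes no new repeated pair. Once this is secured, Step~3 of \mr is provably never exercised, \mr is forced to match \rp in its own order, and the inequality $\gsize{mr}=11f+2>10f=\gsize{rp}$ follows immediately.
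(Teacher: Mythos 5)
Your proposal is correct and takes essentially the same approach as the paper: the paper's own proof likewise reduces the statement, via the equality case of Theorem~\ref{theo:ratio} (\mr and \rp coincide under a common \mrorder whenever every replaced maximal repeat has length $2$), to the fact from Remark~\ref{rem:order} that the size of the grammar generated by \rp depends on the \mrorder. You merely instantiate that reduction with the explicit family $T=BL^{f-1}R^{f-1}$ of Theorem~\ref{theo:gsgrpomega}, verifying the length-$2$ coincidence directly (Step~3 never fires, and taking $f\ge 3$ keeps the residual pairs replaceable), which if anything strengthens the conclusion by exhibiting a gap of $f+2$ growing with $n$ rather than the constant gap of the example in Remark~\ref{rem:order}.
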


\begin{proof}
	As shown in Theorem~\ref{theo:ratio},
	the size of grammar generated by \mr would be strictly equal to that by \rp with the same \mrorder.
    Thus, we can reduce this problem to
    the problem that there is a difference between sizes of possible grammars generated by \rp as stated in Remark~\ref{rem:order}.
	Hence, there are the cases stated in the proposition
	if the \mrorder of \mr matches with a \mrorder of \rp which does not generate the smallest \rp grammar.
\end{proof} 

We can implement \mr by extending the original implementation of \rp
stated in~\cite{892708},
holding the same complexity.

\begin{theorem}\label{theo:ts}
	Let $G=\set{V,\Sigma ,S,R}$ be the generated grammar by \mr for a given text with length $n$.
	Then,
	\mr works in $\ord (n)$ expected time and
	$5n + 4k^2 + 4k^{\prime} + \lceil \sqrt{n + 1} \rceil - 1$ word space,
	where $k$ and $k'$ are the cardinalities of $\Sigma$ and $V$, respectively.
\end{theorem}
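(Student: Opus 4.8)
The plan is to show that \mr admits an implementation that reuses, essentially unchanged, the data structures underlying the original \rp of Larsson and Moffat~\cite{892708}, so that the bounds of Lemma~\ref{lem:rp} carry over. First I would recall that \rp represents the current text as a doubly-linked list of symbols, keeps a hash table that records for each adjacent pair both its occurrence count and a linked list of its occurrences, and maintains a priority structure over these counts that reports a most frequent pair in $\ord(1)$ amortized time; these are precisely the structures responsible for the $5n + 4k^2 + 4k' + \lceil\sqrt{n+1}\rceil - 1$ words of space and for the $\ord(n)$ expected time.

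The key observation is that, by Lemma~\ref{lem:pm}, the most frequent pair $p$ sits inside a unique most frequent maximal repeat $r$, so \mr can recover $r$ from $p$ by repeatedly attempting to extend the current candidate one symbol to the left and one to the right. A left extension $c\,p$ remains most frequent exactly when every occurrence of the candidate is preceded by the same symbol $c$, i.e.\ when the extended pair still has frequency $f$; this test is nothing but a pair-frequency lookup in the hash table already maintained by \rp. Hence identifying a maximal repeat of length $\ell$ occurring $f$ times costs $\ord(\ell f)$ time, after which its $f$ occurrences are rewritten to a single fresh variable and the $\ord(1)$ pairs straddling each occurrence boundary are updated, exactly as in a \rp substitution.

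Next I would bound the total running time by amortizing directly against the shrinkage of the text. Collapsing a maximal repeat $r$ of length $\ell$ occurring $f$ times turns $\ell f$ symbols into $f$ variables, reducing the text length by $(\ell-1)f \ge \tfrac{1}{2}\ell f$ because $\ell \ge 2$; the $\ord(\ell f)$ cost of probing the extensions and rewriting the occurrences is therefore within a constant factor of this reduction. Since the length only decreases and starts at $n$, the sum of all reductions is at most $n$, so the total work of \mr is $\ord(n)$, with the expectation entering exactly where it does for \rp --- in the $\ord(1)$ expected cost of each hashed pair-frequency query. Equivalently, by Theorem~\ref{theo:rpmr} the collapse of $r$ corresponds to the $\ell-1$ consecutive most-frequent-pair substitutions that \rp performs on the same occurrences, so \mr's cost is within a constant of \rp's, which is $\ord(n)$ by Lemma~\ref{lem:rp}; Theorem~\ref{theo:order} guarantees that the two algorithms, run in the same \mrorder, pass through corresponding states. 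Because every auxiliary quantity \mr uses fits inside the structures \rp already allocates, the word-space bound is inherited without change.

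The step I expect to be the main obstacle is verifying that the extension phase and the re-evaluation in \textbf{Step~5} of Definition~\ref{def:mr} can be charged without double counting: one must confirm that probing a left or right extension is exactly the $\ord(1)$-expected pair-frequency query already supported, and that after a bulk replacement only the $\ord(1)$ pairs at each of the $f$ occurrence boundaries need updating, so that \textbf{Step~5} costs $\ord(f)$ rather than a fresh scan of the whole text. Once this local-update invariant is in place, both the $\ord(n)$ expected time and the space bound of Lemma~\ref{lem:rp} follow.
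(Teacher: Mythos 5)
Your overall route coincides with the paper's: implement \mr on top of the Larsson--Moffat structures so that the space bound of Lemma~\ref{lem:rp} is inherited verbatim, observe that growing the seed pair into its maximal repeat costs $\ord(fl)$ extra time per replacement, and amortize this against the shrinkage of the text, which is at least $f(l-1)\ge\frac{1}{2}fl$ since $l\ge 2$, giving $\ord(n)$ expected time in total; the paper's proof is exactly this argument, plus the one-line remark that the trimming check $r[1]=r[|r|]$ of {\bf Step~3} in Definition~\ref{def:mr} takes constant time (a point you omit, but it is trivial).

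There is, however, one genuine flaw, and it sits precisely at the step you flagged as the main obstacle: probing an extension is \emph{not} ``nothing but a pair-frequency lookup'' in \rp's hash table. For a candidate $s$ of length greater than two, the left extension $cs$ is not an adjacent pair of symbols in the current sequence, so the table simply has no entry for it; and even when $s$ is the initial pair, the equivalence you assert (``every occurrence of $s$ is preceded by $c$ iff the extended pair still has frequency $f$'') fails in one direction. Concretely, suppose $s=ab$ occurs $f$ times, half of the occurrences preceded by $c$ and half by $d$, while the pair $ca$ additionally occurs $f/2$ times elsewhere, never followed by $b$; then $\occ{ca}=f$ although $\occ{cab}=f/2<f$, so a frequency lookup would wrongly extend the candidate and the algorithm would no longer be replacing a maximal repeat. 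The correct mechanism---and the one the paper's proof intends when it speaks of checking ``the left- and right-extensions for all occurrences of the focused pair''---is to walk the $f$ occurrences via the pair's occurrence list and inspect the left and right neighbours of each occurrence in the doubly-linked list, costing $\ord(f)$ per one-symbol extension and $\ord(fl)$ per repeat. Fortunately this is exactly the budget you already charge against the shrinkage $f(l-1)$, so your amortization, the treatment of {\bf Step~5} as $\ord(f)$ boundary updates, and the space accounting all survive once the probing step is implemented and justified this way rather than by hashed lookups.
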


\begin{proof}
	Compared with \rp,
	the additional operations which \mr does in our implementation are
	(i) it extends the selected pair to left and right until it becomes a maximal repeat, and
	(ii) it checks and excludes either of the beginning or the end of the obtained maximal repeat if they are the same.
	They can be realized by only using the same data structures as that of \rp.
	Then, the space complexity of \mr follows Lemma~\ref{lem:rp}.

	We can clearly execute operation (ii) in constant time.
	So we consider how the time complexity is affected by operation (i).
	Let $l$ be the length of the maximal repeat containing the focused pair,
	and $f$ be the frequency of the pair.
	Then,
	when \mr checks the left- and right-extensions for all occurrences of the focused pair,
	$\ord(fl)$ excessive time is required compared with \rp.
	However, the length of entire text is shortened at least $f(l-1)$ by the replacement.
	Therefore, according to possible counts of replacement through the entire steps of the algorithm,
	\mr works in $\ord(n)$ expected time.
\end{proof}

\begin{Remark}
	We can convert a grammar of \rp to that of \mr by repeating the following transform: 
	If a variable $v$ appears only once on the right-hand side of other rules, 
	remove the rule for $v$ and replace the one occurrence of $v$ with the right-hand side of the removed rule. 
	However, time and space complexity stated in Theorem~\ref{theo:ts} cannot be achieved in this manner,
	since additional operations and memory for searching and storing such variables are required.
\end{Remark}

\section{Experiments}\label{sec:exp}

We implemented \mr and measured the number of generated rules and the execution time
in order to compare it to existing \rp implementations and
Re-PairImp\footnote{\url{https://bitbucket.org/IguanaBen/repairimproved}}
proposed by Ga{\'n}czorz and Je{\.z}~\cite{ganczorz2017improvements}.

As stated in Remark~\ref{rem:order},
the size of a generated grammar depends on the MR-order.
In practice, the MR-order varies how we implement the priority queue managing symbol pairs.
To see this, we used five \rp implementations in the comparison;
they were implemented by
Maruyama\footnote{\url{https://code.google.com/archive/p/re-pair/}},
Navarro\footnote{\url{https://www.dcc.uchile.cl/~gnavarro/software/index.html}},
Prezza\footnote{\url{https://github.com/nicolaprezza/Re-Pair}}~\cite{7921912},
Wan\footnote{\url{https://github.com/rwanwork/Re-Pair}; We ran it with level 0 (no heuristic option).},
and Yoshida%
\footnote{\url{https://github.com/syoshid/Re-Pair-VF}; We removed a routine to find the best rule set.}.

Table~\ref{tbl:used_text} summarizes the details of the texts we used in the comparison.
We used three texts as highly repetitive texts;
one is a randomly generated text (rand77.txt),
and the others are a Fibonacci string (fib41)
and a German text (einstein.de.txt) which were selected from Repetitive Corpus of Pizza\&Chili Corpus%
\footnote{\url{http://pizzachili.dcc.uchile.cl/repcorpus.html}}.
The randomly generated text, rand77.txt, consists of alphanumeric symbols and some special symbols;
and it is generated by concatenating 32 copies of a block that includes 1024 random patterns of length 64,
i.e., the size is $64\times 1024\times 32=2,097,152$ byte.
In addition, we used three texts (E.coli, bible.txt, world192.txt) for real data selected from Large Corpus%
\footnote{\url{http://corpus.canterbury.ac.nz/descriptions/\#large}}.
We executed each program seven times for each text and measured the elapsed CPU time only for grammar generation process.
We calculated the average time of the five results excluding the minimum and maximum values among seven.
We ran our experiments on a workstation equipped with an Intel(R) Xeon(R) E5-2670 2.30GHz dual CPU with 64GB RAM, running on Ubuntu 16.04LTS on Windows 10.
All the programs are compiled by gcc version 7.3.0 with ``-O3'' option.

\begin{table*}[tb]
	\begin{center}
		\caption{Text files used in our experiments.}%
		\vspace{12pt}
		\label{tbl:used_text}
		\begin{tabular}{l@{\quad}r@{\quad}r@{\quad}l}
			\hline
			\multicolumn{1}{c}{texts~~~~} & \multicolumn{1}{c}{size (bytes)~~} & \multicolumn{1}{c}{$|\Sigma|$~~} & \multicolumn{1}{c}{contents~~~~}                      \\
			\hline
			rand77.txt                    & 2,097,152                         & 77                               & 32 copies of 1024 random patterns of length 64        \\
			fib41                         & 267,914,296                       & 2                                & Fibonacci string from Pizza\&Chili Corpus             \\
			einstein.de.txt               & 92,758,441                        & 117                              & Edit history of Wikipedia for Albert Einstein \\
			E.coli                        & 4,638,690                         & 4                                & Complete genome of the E. Coli bacterium              \\
			bible.txt                     & 4,047,392                         & 63                               & The King James version of the bible                   \\
			world192.txt                  & 2,473,400                         & 94                               & The CIA world fact book                               \\
			\hline
		\end{tabular}
	\end{center}
	\vspace{-12pt}
\end{table*}

Table~\ref{tbl:exp_res_v2} lists the experimental results.
Here, we excluded the number of rules that generate a single terminal symbol from the number of rules
because it is the same between \mr and \rp.
As shown in the table,
for all texts except for fib41,
the size of rules generated by each \rp implementation differs from each other.%
\footnote{We found that the results of Yoshida were the same as those of Maruyama because Yoshida utilized the code of Maruyama.}.
In any case, \mr is not inferior to \rp in the size of rules.
For rand77.txt in particular,
the number of rules decreased to about 11\% and the size of rules decreased to about 55\%.
For einstein.de.txt, moreover,
the number of rules decreased to about 44\% and the size of rules decreased to about 72\%.
On the other hand, for the texts of Large Corpus, which are not highly repetitive,
it turned out that the effect of improvement was limited.
Note that fib41 does not contain any maximal repeats longer than 2 without overlaps.
Therefore, \mr generates the same rules as \rp.
Also note that \mr runs at a speed comparable to the fastest implementation of \rp.

\begin{table*}[t]
\begin{center}
  \caption{
    The sizes of generated grammars and the execution times.
    Each cell in the table represents
    the number of generated rules,
    the total lengths of the right side of all the rules except for the start variable,
    the length of the right side of the start variable,
    and the total grammar size
    in order from the top row.
	The fifth row separated by a line represents the execution time with seconds.
  }
\vspace{10pt}
\label{tbl:exp_res_v2}
\begin{tabular}{l|rrrrrrr}
\hline
               & \multicolumn{5}{l}{RePair}   & \multicolumn{1}{l}{Re-PairImp} & \multicolumn{1}{l}{MR-RePair} \\
	text file      & \multicolumn{1}{l}{Maruyama} & \multicolumn{1}{l}{Navarro} & \multicolumn{1}{l}{Prezza} & \multicolumn{1}{l}{Wan} 
	& \multicolumn{1}{l}{Yoshida} & \multicolumn{1}{l}{}           & \multicolumn{1}{l}{}          \\
\hline
rand77.txt     & 41,651    & 41,642        & 41,632     & 41,675    & 41,651      & 41,661     & {\bf 4,492}        \\
               & 83,302    & 83,284        & 83,264     & 83,350    & 83,302      & 83,322     & {\bf 46,143}       \\
			   & 9         & {\bf 2}       & 7          & {\bf 2}   & 9           & {\bf 2}    & 9            \\
			   & 83,311    & 83,286        & 83,271     & 83,352    & 83,311      & 83,324     & {\bf 46,152} \\
\cline{2-8}
			   & 0.41      & {\bf 0.37}    & 4.76       & 4.27      & 0.40        & 3.95       & 0.42         \\
\hline
fib41          & 38        & 38            & 38         & 38        & 38          & {\bf 37}         & 38           \\
			   & 76        & 76            & 76         & 76        & 76          & {\bf 74}         & 76           \\
			   & {\bf 3}   & {\bf 3}       & {\bf 3}    & {\bf 3}   & {\bf 3}     & 23         &  {\bf 3}           \\
               & {\bf 79}  & {\bf 79}      & {\bf 79}   & {\bf 79}  & {\bf 79}    & 97         & {\bf 79}     \\
\cline{2-8}
			   & 26.75     & {\bf 23.94}   & 96.05      & 483.86    & 25.04       & 1360.40    & 33.62        \\
\hline
einstein.de.txt & 49,968    & 49,949        & 50,218     & 50,057    & 49,968      & 49,933     & {\bf 21,787}       \\
               & 99,936    & 99,898        & 100,436    & 100,114   & 99,936      & 99,866     & {\bf 71,709}       \\
               & 12,734    & 12,665        & 13,419     & {\bf 12,610}    & 12,734      & 12,672     & 12,683       \\
			   & 112,670   & 112,563       & 113,855    & 112,724   & 112,670     & 112,538    & {\bf 84,392} \\
\cline{2-8}
			   & 30.08     & 43.45         & 216.74     & 213.15    & 30.76       & 452.56     & {\bf 29.63}  \\
\hline
E.coli         & 66,664    & 66,757        & 66,660     & 67,368    & 66,664      & 66,739     & {\bf 62,363}       \\
               & 133,328   & 133,514       & 133,320    & 134,736   & 133,328     & 133,478    & {\bf 129,138}      \\
               & 651,875   & {\bf 649,660}       & 650,538    & 652,664   & 651,875     & 650,209    & 650,174      \\
			   & 785,203   & 783,174       & 783,858    & 787,400   & 785,203     & 783,687    & {\bf 779,312}\\
\cline{2-8}
			   & 1.20      & {\bf 1.02}    & 14.67      & 10.37     & 1.56        & 27.04      & 1.33         \\
\hline
bible.txt      & 81,193    & 81,169        & 80,999     & 81,229    & 81,193      & 81,282     & {\bf 72,082}       \\
               & 162,386   & 162,338       & 161,998    & 162,458   & 162,386     & 162,564    & {\bf 153,266}      \\
               & 386,514   & 386,381       & 386,992    & 386,094   & 386,514     & {\bf 385,989}    & 386,516      \\
			   & 548,900   & 548,719       & 548,990    & 548,552   & 548,900     & 548,553    & {\bf 539,782}\\
\cline{2-8}
			   & 1.33      & {\bf 1.21}    & 13.00      & 9.12      & 1.47        & 24.38      & 1.27         \\
\hline
world192.txt   & 55,552    & 55,798        & 55,409     & 55,473    & 55,552      & 55,437     & {\bf 48,601}       \\
               & 111,104   & 111,596       & 110,812    & 110,946   & 111,104     & 110,874    & {\bf 104,060}      \\
               & 213,131   & 213,962       & 213,245    & {\bf 212,647}   & 213,131     & 212,857    & 212,940      \\
			   & 324,235   & 325,558       & 324,057    & 323,593   & 324,235     & 323,731    & {\bf 317,000}\\
\cline{2-8}
			   & 0.59      & 0.80          & 7.57       & 4.89      & {\bf 0.56}  & 12.35      & 0.66         \\
\hline
\end{tabular}
\end{center}
\end{table*}

\section{Conclusion}\label{sec:concl}

	In this thesis,
    we analyzed \rp and showed
    that \rp replaces step by step the most frequent pairs within the corresponding most frequent maximal repeats.
    Motivated by this analysis,
	we designed a novel variant of \rp, called \mr,
	which is based on substituting the most frequent maximal repeats at once
    instead of substituting the most frequent pairs consecutively.
	Moreover, we implemented \mr and compared the grammar generated by it to that by \rp for several texts,
	and confirmed the effectiveness of \mr experimentally especially for highly repetitive texts.

    We defined the greatest size difference of any two possible grammars that can be generated by RePair for a given text, and named it as GSDRP.
    Moreover, we showed that a lower bound of GSDRP is $\frac{1}{6}(\sqrt{6n+1}+13)$ for a given text of length $n$.
    Improving the lower bound and showing a upper bound of GSDRP are our future works.



    Although we did not discuss how to encode grammars,
    it is a very important issue from a practical point of view.
    For \mr, if we simply use delimiters to store the rule set,
    the number of rules may drastically affect the compressed data size.
    To develop an efficient encoding method for \mr is one of our future works.

\section*{Acknowledgments}
The authors would like to thank the people who provided the source codes.
This work was supported by JSPS KAKENHI Grant Numbers JP17H06923, JP17H01697, JP16H02783, JP18H04098, and JP18K11149, Japan.
This work was also supported by JST CREST Grant Number JPMJCR1402, Japan.

%

\clearpage

\clearpage
\appendix
\section{Appendix}\label{appendix}

We show Figures~\ref{fig:apprp},~\ref{fig:appnmr},~and~\ref{fig:appmr} to help for understanding Proof of Theorem~\ref{theo:naive}.

Let $\gram{rp}$, $\gram{nmr}$, and $\gram{mr}$ be the grammars generated by \rpn, \nmrn, and \mrn, respectively.
For a given text $T = a_1\cdots a_n ~(a_i \in \Sigma, ~1\le i\le n)$ of length $|T| = n$,
let $\gsize{rp}$, $\gsize{nmr}$, and $\gsize{mr}$ be the sizes of $\gram{rp}$, $\gram{nmr}$, and $\gram{mr}$, respectively.
Here, assume that $T = (aw)^{2(2^m-1)+1} a$,
where $w\in \Sigma^{+}$ such that $awa$ is the most frequent maximal repeat of $T$,
and $m \in \mathbb{N}^{+}$.
Then, by Proof of Theorem~\ref{theo:naive}, $\gsize{nmr} > \gsize{rp}$ holds with
some $m$ and $w$ such that $(m - 1)(|w| - 1) > 1$.

We show a concrete example of the grammar generation process of \rp and $\gram{rp}$ 
for $T = ({\rm {\tt abcd}})^7{\rm {\tt a}}$ with $m = 2$ and $|w| = 3$ in Figure~\ref{fig:apprp}.
The size $\gsize{rp}$ is $18$ for this example.
We also show an example of the process of \nmr and $\gram{nmr}$ for the same $T$ in Figure~\ref{fig:appnmr}.
As we see, the size $\gsize{nmr}$ is $19$, and thus $\gsize{nmr} > \gsize{rp}$ holds.
As shown in Figure~\ref{fig:appnmr}, in particular cases,
\nmr may fail to extract repetitive patterns
(like {\tt abcd} of $({\rm {\tt abcd}})^7{\rm {\tt a}}$ for the running example).
However, this problem is solved by using \mrn.
We show an example of the process of \mr and $\gram{mr}$ for the same $T = ({\rm {\tt abcd}})^7{\rm {\tt a}}$ in Figure~\ref{fig:appmr}.
The size $\gsize{mr}$ is $16$ and this is smaller than $\gsize{rp} = 18$.
While the most frequent maximal repeat at the second replacement step is 
$v_{\rm {\tt a}}v_{\rm {\tt b}}v_{\rm {\tt c}}v_{\rm {\tt d}}v_{\rm {\tt a}}$,
\mr replaces
$v_{\rm {\tt a}}v_{\rm {\tt b}}v_{\rm {\tt c}}v_{\rm {\tt d}}$ with new variable $v_1$
because of the additional {\bf Step 3} of Definition~\ref{def:mr}.

\begin{landscape}
	\begin{figure}
	\includegraphics[angle=270, trim=0cm 0cm 17cm 0cm, clip, scale=0.85]{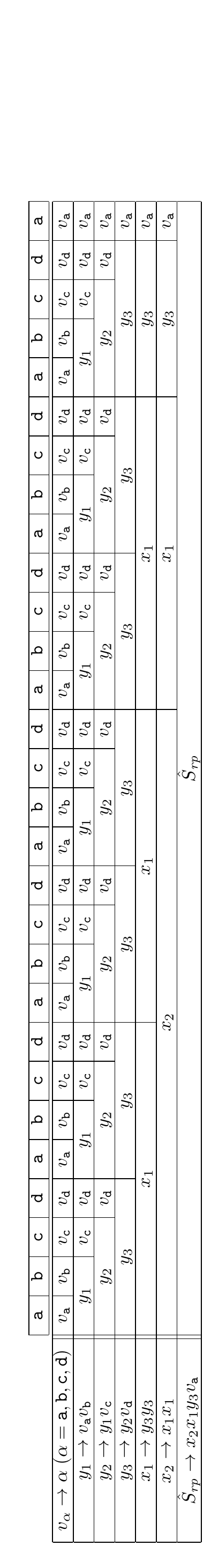}
		\caption{Grammar generation process of \rp and its generated grammar for text $({\rm {\tt abcd}})^{7}${\tt a}.
		The grammar size is $18$.}%
	\label{fig:apprp}
	\end{figure}

	\begin{figure}
	\includegraphics[angle=270, trim=0cm 0cm 18.2cm 0cm, clip, scale=0.85]{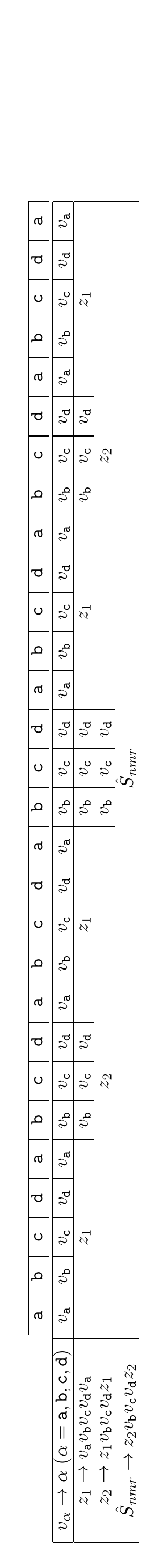}
		\caption{Grammar generation process of \nmr and its generated grammar for text $({\rm {\tt abcd}})^{7}${\tt a}.
		The grammar size is $19$.}%
	\label{fig:appnmr}
	\end{figure}

	\begin{figure}
	\includegraphics[angle=270, trim=0cm 0cm 17.8cm 0cm, clip, scale=0.85]{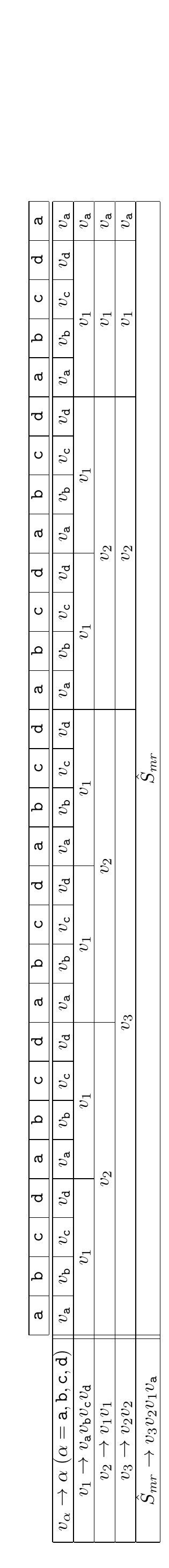}
		\caption{Grammar generation process of \mr and its generated grammar for text $({\rm {\tt abcd}})^{7}${\tt a}.
		The grammar size is $16$.}%
	\label{fig:appmr}
	\end{figure}
\end{landscape}

\end{document}